\renewcommand{\P}{\mathbb{P}} %probability
\newcommand{\Var}{\mathrm{Var}} %variance
\newcommand{\Cov}{\mathrm{Cov}} %covariance
\newcommand{\1}{\mathbbm{1}} %indicator function
\newcommand{\R}{\mathbb{R}} %real numbers
\renewcommand{\O}{\mathcal{O}} %big O
\newcommand{\iid}{\stackrel{i.i.d.}{\sim}} %iid distributed
\DeclareMathOperator*{\argmin}{arg\,min} %argmin
\DeclareMathOperator{\Tr}{Tr} %trace of a matrix
\DeclarePairedDelimiter\floor{\lfloor}{\rfloor} %can use \floor*{...}, * needed for scaling
\DeclarePairedDelimiter\norm{\lVert}{\rVert} %makes sure subscripts of the norms are placed properly
\begin{document}

\title{Spectral Deconfounding via Perturbed Sparse Linear Models}

\author{\name Domagoj \'{C}evid \email cevid@stat.math.ethz.ch \\
       \addr Seminar f\"{u}r Statistik\\
       ETH Z\"{u}rich\\
       8092 Z\"{u}rich, Switzerland
       \AND
       \name Peter B\"{u}hlmann \email peter.buehlmann@stat.math.ethz.ch \\
       \addr Seminar f\"{u}r Statistik\\
       ETH Z\"{u}rich\\
       8092 Z\"{u}rich, Switzerland
       \AND
       \name Nicolai Meinshausen \email meinshausen@stat.math.ethz.ch \\
       \addr Seminar f\"{u}r Statistik\\
       ETH Z\"{u}rich\\
       8092 Z\"{u}rich, Switzerland
       }

\editor{}

\maketitle

\begin{abstract}
Standard high-dimensional regression methods assume that the underlying coefficient vector is sparse. This might not be true in some cases, in particular in presence of hidden, confounding variables. Such hidden confounding can be represented as a high-dimensional linear model where the sparse coefficient vector is perturbed. For this model, we develop and investigate a class of methods that are based on running the Lasso on preprocessed data. The preprocessing step consists of applying certain spectral transformations that change the singular values of the design matrix. We show that, under some assumptions, one can achieve the usual Lasso $\ell_1$-error rate for estimating the underlying sparse coefficient vector, despite the presence of confounding. Our theory also covers the Lava estimator \citep{lava2017} for a special model class. The performance of the methodology is illustrated on simulated data and a genomic dataset. 
\end{abstract}

\begin{keywords}
confounding, data transformation, Lasso, latent variables, principal components
\end{keywords}

%\acks{grant}

\section{Introduction}
Many datasets nowadays include measurements from many variables. The corresponding models are typically high-dimensional with many more parameters than the sample size.
For statistical estimation and inference, there is a vast literature which assumes sparsity. For example, see the monographs by \citet{buhlmann2011statistics}, \citet{giraud2014introduction} or \citet{hastie2015statistical}. 

However, the performance of many high-dimensional regression methods might suffer in presence of unobserved confounding variables which affect both the predictors and the response. Confounding is a severe issue when interpreting regression parameters, often, but not necessarily, in connection with causal inference. A prime example are genetic studies where unobserved confounding can easily lead to spurious correlations and partial dependencies \citep{novembre2008genes}. Even when one is concerned with only prediction, the causal parameter leads to predictive robustness against perturbations of the confounding variables.

Adjusting for unobserved confounding variables is very important in practice and several deconfounding methods have been suggested for various settings \citep{gerard2017empirical, leek2007capturing, gagnon2012using, wang2018blessings, paul2008preconditioning}. Often, the methods try to estimate the confounding variables directly from the data, usually by using some factor analysis technique. There are not many theoretical results justifying the methods, especially since some of them are quite complicated and therefore difficult to analyze.

Our focus is on linear models. In absence of confounding variables, when the response is affected only by a small number of predictors, i.e. the coefficient vector is sparse, one can efficiently estimate the active set and the corresponding coefficients with the Lasso and related methods and thus achieve the minimax optimal $\ell_1$-norm estimation error rate, see, for example, \citet{bickel2009simultaneous} or the monographs by \citet{buhlmann2011statistics} or \citet{wainwright2019high}. However, these methods are not adequate in presence of confounding in linear model, since in addition to just a few predictors that indeed affect the response, we have additional association of the response with many other predictors, as they contain information about the confounding variables.

Some approaches for relaxing the sparsity assumption are (i) the notion of weak sparsity \citep{van2016estimation}, where the regression parameter $\beta$ fulfills the condition that $\norm{\beta}_q$ is small for some $0 < q < 1$ or (ii) assuming the structure that the regression parameter can be represented as a sum of a sparse and a dense vector. The case (i) does not call for a new method or algorithm: in fact, the Lasso still exhibits optimal convergence rate if $\norm{\beta}_q$ is sufficiently small \citep{van2016estimation}. On the other hand, case (ii) requires a different method such as, for example, Lava \citep{lava2017}. 

Here we investigate how to deal with the confounding by analyzing the second case where the parameter is a sum of a sparse and a dense part. If many predictors are affected by the confounding variables, the true underlying regression vector will be changed by some small, dense perturbation. We propose left multiplying the response $Y$ and the design matrix $X$ consisting of the values of the predictors by a carefully chosen spectral transformation matrix $F$ which transforms the singular values of $X$. The transformed response and design matrix can then be used as the input for a high-dimensional sparse regression technique: we consider the Lasso as a prime example. We investigate the theoretical properties and empirical performances for the class of spectral transformations. As a result, we conclude that certain spectral transformations that shrink the large singular values, such as the Trim transform which we introduce in this paper, perform well over a range of scenarios, pointing out also some advantages over other techniques and approaches.

\subsection{Relation to other work and our contribution}
For adjusting for the effect of unobserved confounding, the most prominent method in practice is to adjust for the top several principal components of the predictors, see for example \citep{novembre2008genes}. Such PCA adjustment is also a special case of the FarmSelect estimator \citep{fan2020factor} for the linear model, which considers the problem of high-dimensional variable selection where the latent variables cause the correlations of the predictors, but do not directly affect the response. PCA adjustment is a special case of a spectral transformation. Our presented theory explains when and why this method works well and proposes an alternative transformation, called Trim transform, which has an advantage that one does not need to estimate the number of principal components to adjust for.

The Puffer transform, which maps all singular values to $1$, has also been suggested for improving the variable selection properties of the Lasso for a sparse high-dimensional linear model \citep{jia2015preconditioning}. Our theory gives a more precise result about the Puffer transform for the estimation problem: the Trim transform is at least as good as Puffer transform and substantially better when the sample size is close to the number of predictors. In \citet{shah2018rsvp}, the Puffer transform in combination with bootstrap aggregation is used in order to estimate the covariance matrix in presence of confounding variables, a very different quantity than the precision matrix or regression coefficients.

\citet{chandrasekaran2012latent} address the problem of estimating the precision matrix in presence of a few hidden confounding variables. Then the observed precision matrix can be represented as a sum of the initial sparse precision matrix and a low-rank perturbation due to the confounding variables. Their model is similar to the one we consider, but the assumptions and the goals differ. We aim to estimate just the regression coefficients instead of the whole precision matrix and the method we propose is much simpler. Furthermore, the theoretical conclusions are substantially different: we establish the convergence rates in terms of the $\ell_1$-norm estimation error, while they consider support recovery and $\ell_\infty$-norm bounds for the low-dimensional setting, assuming strong conditions. Also \citet{fan2013large} have considered low rank plus sparse problems from the viewpoint of factor models: their contribution provides a rich source of references from an area which is vaguely related to our current work.

The Lava estimator \citep{lava2017} is the most similar to our Trim transform. The theory we develop, covering also the Lava, gives a result for the $\ell_1$-norm estimation error rate for the sparse coefficient vector. This goes well beyond the theory given by \citet{lava2017} for justifying the original and interesting Lava method. There, the authors mostly consider the Gaussian sequence model but also provide general bounds for high-dimensional regression whose (e.g. asymptotic) behavior is not further analyzed in terms of restricted eigenvalues and the sparse and dense component of the underlying unknown parameter vector. Our presented theory exploits the specific structure of a hidden confounding model which provides a different motivation than the one in \citet{lava2017}, where no confounding was considered. In addition, our developments suggest a simple rule for the choice of the $\ell_2$-norm regularization parameter for the Lava estimator, leaving only the $\ell_1$-norm regularization parameter as the single parameter to be tuned by cross-validation.

Our contribution can be seen as threefold. We describe a class of spectral transformations and propose a simple spectral transformation called Trim transform, which is perhaps slightly easier to use than the Lava or the PCA adjustment estimator. Furthermore, for the linear model where the underlying sparse parameter has been perturbed, we provide novel theory establishing for a certain class of spectral transformations a fast convergence rate for the $\ell_1$-norm estimation error of the true underlying sparse parameter. Finally, and as our primary goal, we use these results to show how the issue of hidden confounding can be addressed by using a wisely chosen spectral transformation, such as e.g. Trim transform, with the Lasso afterwards: we establish under certain assumptions the same convergence rate as the one of the Lasso for a linear model without confounding and illustrate the empirical performance of our method on simulated and real genomic data. Our method is entirely modular and can be used not only in conjunction with the Lasso, but also any other reasonable high-dimensional linear regression method.

\section{The models} \label{model}
In this section we consider a linear model with additional confounding. We also introduce a perturbed linear model and show how it relates to the confounding model. Our theoretical results apply to the perturbed linear model as well and it is useful for better understanding of the confounding model.

\subsection{Confounding model}
Consider a standard (high-dimensional) linear model with $n$ observations and $p$ predictors $X_1, \ldots, X_p$ linearly affecting the response $Y$.
Suppose further that $q$ additional unobserved confounding variables linearly affect the response as well. The confounding variables are correlated with the predictors, introducing additional spurious correlations between the response and the predictors.

The model for $n$ i.i.d. observations is given by:
\begin{align} \label{confounding_model}
Y &= X\beta + H\delta + \nu
\end{align}
where $X \in \R^{n \times p}$ is the matrix of predictors and $H \in \R^{n \times q}$ represents the hidden confounding variables, which exhibit correlation with $X$, i.e., $\Cov(H,X) \neq 0$ (with a slight abuse of notation, we write $\Cov(H,\,X)$ as the covariance of any row of $H$ and $X$). We assume that $X$ and $H$ have i.i.d. rows that are jointly Gaussian and that $\nu \in \R^n$ is a vector of sub-Gaussian errors with mean zero and standard deviation $\sigma_\nu$, independent of $X$ and $H$. The vectors $\beta \in \R^p$ and $\delta \in \R^q$ are fixed coefficients; we additionally assume that $\beta$ is sparse with exactly $s$ non-zero components. Since the model does not change under the transformation $H \leftarrow H \Cov(H)^{-1/2}$, $\delta \leftarrow \Cov(H)^{1/2}\delta$, we can assume without loss of generality that $\Cov(H) = I_q$, i.e. the confounding variables are uncorrelated. 

Note that by $L_2$ projection, $X$ can also be written as 
\begin{equation}\label{add-PB1}
X = H\Gamma + E,
\end{equation}
where we choose $\Gamma \in \R^{q \times p}$ such that $\Cov(H, E) = 0$: 
$$\Gamma = \Cov(H)^{-1}\Cov(H, X) = \Cov(H, X).$$

The matrix $\Gamma \in \R^{q \times p}$ describes the linear effect of confounding variables on $X$. The random term $E \in \R^{n \times p}$ can be seen as the unconfounded design matrix; without confounding, i.e. when $H = 0$, it equals $X$. The columns of $E$ are allowed to be correlated and we denote its covariance matrix by $\Sigma_E$; if the components of $E$ are (weakly) uncorrelated, $X$ is generated from an (approximate) factor model \citep{anderson1958introduction, chamberlain1982arbitrage}. Here the hidden variables do not encode a factor structure for $X$ alone, but also in addition generate confounding effects.

A main example of the above model is a structural equation model (SEM)
\begin{eqnarray*}
& &X \leftarrow H\Gamma + E,\\
& &Y \leftarrow X\beta + H\delta + \eta
\end{eqnarray*}
and thus $\beta$ is the direct causal effect of $X$ on $Y$. In a standard SEM with no further hidden variables, the components of $E$ would be assumed independent.

We will show in Section \ref{theory} that one can recover the coefficient $\beta$ if the confounding is dense in a certain sense, e.g. when the rows or columns of $\Gamma = \Cov(H, X)$ are realizations of independent and identically distributed random variables with mean zero.

\subsection{Perturbed linear model}
The confounding model (\ref{confounding_model}) is related to the perturbed linear model
\begin{equation} \label{perturbed_model}
Y = X(\beta + b) + \epsilon,
\end{equation}
where the sparse coefficient vector $\beta$ has been perturbed by the perturbation vector $b \in \R^p$ and $\epsilon \in \R^n$ is the vector of sub-Gaussian errors independent of $X$ with standard deviation $\sigma$. Here we assume that the rows of $X$ are i.i.d. sub-Gaussian vectors with mean zero and covariance matrix $\Sigma = \Cov(X)$.

The relationship between models arises by rewriting (\ref{confounding_model}) as
$$Y = X(\beta + b) + (H\delta - Xb) + \nu,$$
where $b$ satisfies that $\Cov(X, H\delta - Xb) = 0$, i.e., $Xb$ is the $L_2$-projection of $H \delta$ onto $X$. This gives us the formula
\begin{align} \label{perturbation_formula}
b &= \Cov(X)^{-1}\Cov(X, H)\delta \nonumber \\
&= \left(\Cov(X, H) \Cov(H)^{-1} \Cov(H, X) + \Cov(E)\right)^{-1}\Cov(X, H)\delta
\end{align}

The error is given by $\epsilon = (H\delta - Xb) + \nu$, which by construction of $b$ is uncorrelated with $X$ and thus independent of $X$, because the rows of $X$ and $H$ are assumed to be jointly Gaussian in the confounding model. We require such independence (induced by joint Gaussianity) in the proof of Theorem \ref{general_bound}, although $\epsilon$ being uncorrelated with $X$ might be sufficient. The variance of the error is given by
$$\sigma^2 = \Var(H\delta - Xb + \nu) \leq \norm{\delta}_2^2 + \sigma_\nu^2.$$

One can think of $H\delta - Xb$ as the part of the confounding that can not be explained by $X$ and which just increases the variance of the additive error. $Xb$ is the part of the confounding effect $H\delta$ that is correlated with $X$ and, as is well known, the bias $b$ due to the confounding makes the estimation of $\beta$ more difficult.

In conclusion, the confounding model (\ref{confounding_model}) can be thought of as a special case of the perturbed linear model (\ref{perturbed_model}), but with additional relationship between the design matrix $X$, the perturbation vector $b$, given by (\ref{perturbation_formula}), and the additive error $\epsilon$.

The perturbed linear model is in general unidentifiable since we can only infer $\beta + b$ from the data generating distribution. This makes the estimation of $\beta$ impossible, unless $b$ has a certain structure; we will be able to asymptotically retrieve the sparse coefficient vector $\beta$, by assuming, for example, that $b$ converges to $0$ in some norm. In Section \ref{theory}, we investigate under which conditions we are able to infer the sparse part $\beta$ and how efficiently in terms of statistical accuracy. 

It could be interesting to estimate the coefficient vector $\beta + b$ rather than just $\beta$, but it is impossible to do in general in the high-dimensional case; even if we knew $\beta$ exactly, estimating $b$ would mean estimating $p$ coefficients from $n < p$ data points, which is impossible without additional assumptions about the structure of $b$.

\subsection{Relationship with the factor model literature}
Even though the confounding variables are hidden, we are able to infer some of their properties if they affect many of the observed predictors $X$. This is the essence of factor analysis, where a lot of interesting work has been done. If the latent factors $H$ linearly affect the covariates, as it is the case in the confounding model \eqref{confounding_model}, they can be estimated well (up to a rotation) from the principal components of the design matrix $X = H\Gamma + E$ \citep{chamberlain1982arbitrage, bai2003inferential}, especially if one additionally imposes certain assumptions on the factor loadings $\Gamma$ \citep{bing2017adaptive}.

There are several related models considered in the literature. In certain cases \citep{paul2008preconditioning, bing2019essential} we assume that only the latent factors affect the response and the observed covariates are only used to obtain information about the latent factors:
$$Y = H\delta + \nu, \qquad X = H\Gamma + E.$$
In \citet{bai2006confidence} one has an additional contribution of some other known low-dimensional covariates $W$:
$$Y = W\beta + H\delta + \nu, \qquad X = H\Gamma + E.$$
Another line of work assumes that the latent factors do not directly affect the response:
$$Y = X\beta + \nu, \qquad X = H\Gamma + E,$$
but that they only cause the predictors to be correlated \citep{huang2011variable, fan2020factor}. Such correlation makes the analysis much more difficult, especially for the problem of variable selection, and one can use the factor analysis to address this issue.

In this paper we allow the latent confounders to affect both the predictors and the response and focus on the estimation of the sparse coefficient vector $\beta$, which has a causal interpretation as it describes the direct effect of the predictors on the response. The key difficulty is to handle the bias $b$ in the observational data caused by the latent confounders.
The assumption of dense confounding, expressed in detail in Section \ref{theory}, is related to the spiked covariance assumptions common in the factor analysis literature \citep{paul2008preconditioning, bai2003inferential}. It is used to make conclusions about the structure of the coefficient perturbation $b$ rather than about the factor identifiability. We avoid estimating the factor variables directly, but instead we adjust for them implicitly, by transforming the singular values of $X$.

\section{Methodology}
In the following, we propose and motivate some methods based on a class of spectral transformations.

\subsection{Spectral transformations}
Let $X = UDV^T$ be the singular value decomposition of $X$, where $U \in \R^{n \times r}, D \in \R^{r \times r}, V \in \R^{p \times r}$,  where $r = \min(n, p)$ is the rank of $X$. We write $d_1 \geq d_2 \geq \ldots \geq d_r$ for the diagonal elements of $D$. We use the truncated form of SVD, which uses only non-zero singular values.

The idea is to first transform our data by applying some specific linear transformation $F:\R^n \to \R^n$ and then perform the Lasso algorithm: 
\begin{align} \label{transformed_lasso}
X &\to \tilde{X} \coloneqq FX \nonumber\\
Y &\to \tilde{Y} \coloneqq FY \nonumber \\
\hat{\beta} = \argmin_\beta &\left\{ \frac{1}{n}\|\tilde{Y} - \tilde{X}\beta\|_2^2 + \lambda\|\beta\|_1 \right\}.
\end{align}

We restrict our attention to the class of spectral transformations, which transform the singular values of $X$, while keeping its singular vectors intact. Let $\tilde{D}$ be an arbitrary $r \times r$ diagonal matrix with diagonal elements $\tilde{d}_1, \ldots, \tilde{d}_r$. Our spectral transformation matrix is given by
\begin{equation}
F = U \begin{bmatrix}\tilde{d}_1/ d_1 & 0 & \ldots & 0\\ 0 & \tilde{d}_2/ d_2  & \ldots & 0\\ \vdots & \vdots & \ddots & \vdots\\ 0 & 0 & \ldots & \tilde{d}_r/d_r\\\end{bmatrix}U^T \label{spectraltransformation}
\end{equation}
and then we have 
\begin{equation*}
\tilde{X} = FX = U\tilde{D}V^T
\end{equation*}

In this paper we explore the question of what is a good choice of $F$ for the estimation of $\beta$. In general, the Lasso performs best when the predictors are uncorrelated and when the errors are independent. Therefore, a good choice of $F$ needs to find a good balance between a well behaved error term $\tilde{\epsilon} = F\epsilon$, well behaved design matrix $\tilde{X}$ and well behaved perturbation term $\tilde{X}b$.

One such transformation is the \textbf{Trim transform} which limits all singular values to be at most some constant $\tau$:
\begin{equation}\label{trim}
\tilde{d}_i = \min(d_i, \tau).
\end{equation}
We show in Section \ref{theory} that it can, under some assumptions, achieve the same $\ell_1$-norm error rate for the estimation of the unknown sparse coefficient vector $\beta$ as the Lasso in the case of no confounding. We also show that the median singular value is a good choice of $\tau$:
$$\tau = d_{\floor{r/2}}$$

\subsection{Existing methods and motivation}\label{motivation}

We discuss some existing methods which are related to the spectral transformation method described above and provide further explanations and relationships between them. We also present intuitive explanation why our suggested method should work well against dense confounding.

\subsubsection{Examples of spectral transformations}
Several existing methods consist of first transforming the data with a certain matrix $F$ (some of which fall into class of spectral transformations (\ref{spectraltransformation})), and then using some regression method, such as the Lasso.
\paragraph{Lava}
One such example is the Lava estimator \citep{lava2017}, designed for the linear model where the coefficient vector can be written as a sum of a dense and a sparse vector. It is originally given by (with a slight change of notation)
\begin{equation*}
(\hat{\beta},\, \hat{b}) = \argmin_{\beta, b} \left\{ \frac{1}{n} \|Y - X(\beta + b)\|_2^2 + \lambda_2\|b\|_2^2 + \lambda_1\|\beta\|_1 \right\},
\end{equation*}
which can be seen as a combination of Lasso and Ridge regression. It is shown in \citet{lava2017} that the solution of this optimization problem is given by
\begin{align*}
F &= (I_p - X(X^TX+n\lambda_2I_p)^{-1}X^T)^{1/2} ,\\
\hat{\beta} &= \argmin_{\beta} \left\{ \frac{1}{n} \|\tilde{Y} - \tilde{X}\beta\|_2^2 + \lambda_1\|\beta\|_1 \right\} ,\\
\hat{b} &= (X^TX+n\lambda_2I_p)^{-1}X^T(Y-X\hat{\beta}) .
\end{align*}
From here, one can see that the estimator of the sparse part is just a Lasso estimator applied to the transformed data, where
\begin{equation*}
\tilde{d}_i = \sqrt{\frac{n\lambda_2 d_i^2}{n\lambda_2 + d_i^2}} .
\end{equation*}
This transformation is visualized in Figure \ref{transformations}.

\paragraph{Puffer transform}
Another example is the Puffer transform introduced in \citet{jia2015preconditioning}, which uses the Lasso after mapping all non-zero singular values $d_i$ to a constant $\tilde{d}_i = 1$. The algorithm is analyzed as a preconditioning method for the variable selection problem without any coefficient perturbation. This transformation decreases the correlations between the columns of the design matrix, but it can inflate the errors, especially when $p$ is close to $n$. It can also be thought of as a special case of the Lava transformation in the case when $\lambda_2 \to 0$, since then $\tfrac{\widetilde{d}_i}{\sqrt{n\lambda_2}} \to 1$ (the denominator here is just a scaling factor). The transformation is displayed in Figure \ref{transformations}.

\paragraph{PCA adjustment}
Another example of a spectral transformation is given by PCA-based methods for adjusting for hidden confounders \citep{novembre2008interpreting, fan2020factor, bai2003inferential}. In the confounding model \eqref{confounding_model}, the effect of confounding variables will approximately lie in the span of the first few principal components of $X$ (see Figure \ref{projection}). One adjusts for a first few principal components from the columns of the design matrix $X$ before further analysis in hope of removing the effect of the confounding variables \citep{paul2008preconditioning, huang2011variable}. This procedure is in fact analogous to applying a spectral transformation, where the matrix $\tilde{D}$ is obtained from $D$ by mapping the first several singular values to $0$. See also Figure \ref{transformations} for an illustration. The slight difficulty with this approach is knowing exactly the number of principal components to remove. Asymptotically, this can be done with high probability \citep{bai2003inferential} under certain assumptions on the separation of the singular values. However, for finite samples or if there is a slight model misspecification, it might not be that easy to estimate $q$, see e.g. our real data genomic dataset in Figure \ref{expression-svd}.

\subsubsection{Some intuition}
Since our method (\ref{transformed_lasso}) is invariant under transformation $F \to cF$, for arbitrary constant $c \in \R$, we can assume without loss of generality that the singular values of $F$ are at most $1$, i.e. the transformation $F$ shrinks all vectors, with different shrinkage in directions of its singular vectors. Ideally, we would like to shrink in a way such that the perturbation term $\tilde{X}b$ becomes much smaller compared to the signal $\tilde{X}\beta$. 

Trim transform has the highest shrinkage along directions of the singular vectors corresponding to large singular values. The more $b$ is aligned with the first few singular vectors of $X$ (those corresponding to large singular values), the larger $\norm{Xb}_2$ will be. Therefore, shrinking those large singular values ensures that $\norm{\tilde{X}b}$ stays small regardless of the direction $b$ is pointing to. It is especially the case in the confounding model that $b$ approximately lies in the span of the first few singular vectors (see Figure \ref{projection}).

\begin{figure}[h!]
\centering
\hspace*{0cm}
\includegraphics[scale=0.3]{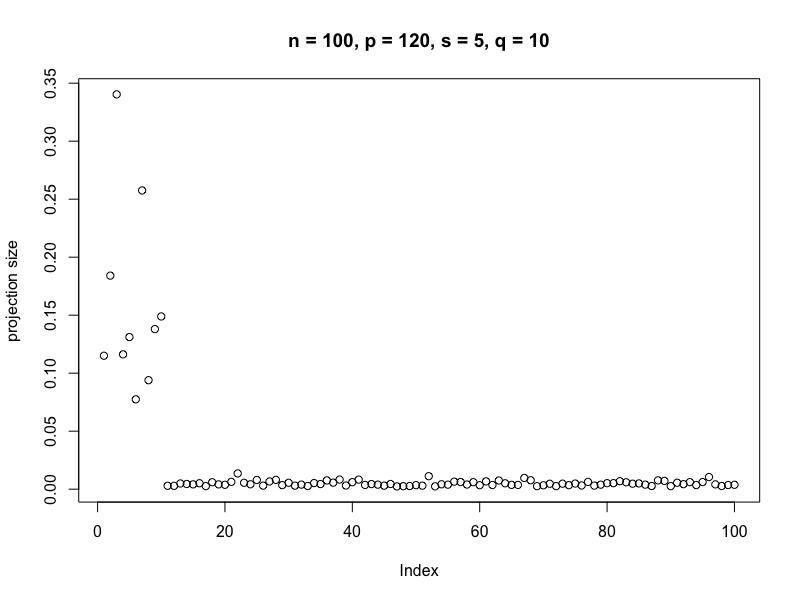}
\caption{Size of the projection of $b$ onto $V_i$ for different $i$, for a random dataset drawn from the confounding model with $q=10$ confounding variables, as described in Section \ref{simsetting}. We see that the projections of $b$ onto the first $10$ singular values are substantially larger than the rest.}
\label{projection}
\end{figure}

As can be seen from definition of $b$, $Xb$ is the part of the confounding effect $H\delta$ which is correlated with $X$. Therefore, $\norm{Xb}_2$ can be just as large as $\norm{H\delta}_2 = \O(\sqrt{n}\norm{\delta}_2)$. However, after applying the Trim transform we have that \[\norm{\tilde{X}b}_2 \leq \lambda_{\max}(\tilde{X})\|b\|_2 = \O\left(\sqrt{p} \times \sqrt{\frac{\norm{\delta}_2^2}{p}}\right) = \O(\norm{\delta}_2),\] which is substantially smaller than before. $\lambda_{\max}(\tilde{X})$ is the largest singular value of $\tilde{X}$, which will be shown in Lemma \ref{averagesingularvalue} to be of order $\sqrt{p}$ for the Trim transform and we have $\norm{b}_2 = \O\big(\sqrt{\norm{\delta}_2^2/p}\big)$ under certain model assumptions by Lemma \ref{perturbationsize}.

On the other hand, the signal $X\beta$ lies in the span of a sparse set of predictors. Therefore, the signal $\tilde{X}\beta$ will be approximately of the same size as the signal $X\beta$ before transformation, unless $\beta$ is aligned with the large singular vectors, which are shrunk the most. This is very unlikely if they are sufficiently random. This is illustrated in Figure \ref{scatter}. Therefore, by shrinking large singular values, $\norm{Xb}_2$ will decrease much more compared to $\norm{X\beta}_2$.

\begin{figure}[h!]
\centering
%\hspace*{-0.5cm}
\includegraphics[scale=0.45]{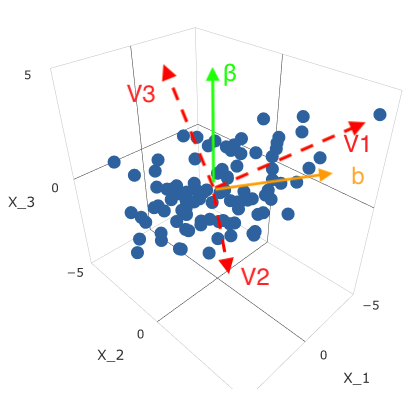}
\caption{Visualisation of the relationship between the perturbation $b$, signal $\beta$ and singular vectors of $X$. In the confounding model $b$ will be much more aligned with the singular vectors corresponding to large singular values than $\beta$.}
\label{scatter}
\end{figure}

\section{Theoretical Results} \label{theory}
In this section we analyse the behaviour of the $\ell_1$-estimation error for the sparse coefficient $\beta$ for an arbitrary spectral transformation $F$. We derive results for the perturbed linear model (\ref{perturbed_model}) and relate them to the confounding model (\ref{confounding_model}) by using the relationship between them.

We show that if our spectral transformation fulfils certain criteria, and the confounding is dense in the sense that every confounding variable affects many predictors, we achieve in the high-dimensional case the same $\ell_1$-error rate as the Lasso in the case when we have no confounding, despite the presence of the coefficient perturbation caused by the confounding variables. Furthermore, in Section \ref{justification}, we discuss specific choices of spectral transformations and verify that the Trim transform (\ref{trim}), as well as Lava and PCA adjustment, can be used in order to achieve this error rate.

We assume first for simplicity that we are in the high-dimensional case, where $p \geq n$. However, the theory developed in this section also holds for the case $n > p$ with small adjustments. We discuss the case $n > p$ in more details in Section \ref{lowdim}.

\subsection{Notation}
$$\phi_M \coloneqq \inf_{\|\alpha\|_1 \leq 5\|\alpha_S\|_1} \frac{\sqrt{\alpha^TM\alpha}}{\frac{1}{\sqrt{s}}\|\alpha_S\|_1} ,$$
where $S$ is the support set of $\beta$, $s$ is the size of $S$ and $\alpha_S$ is a vector consisting only of the components of $\alpha$ which are in $S$.

Let us also write $\tilde{\Sigma} \coloneqq \frac{1}{n}\tilde{X}^T\tilde{X}$, and $\hat{\Sigma} \coloneqq \frac{1}{n}X^TX$.
We denote the $k$-th largest diagonal element of the transformed singular values $\tilde{D}$ by $\tilde{d}_{(k)}$. We denote the the largest, the smallest and $i$-th (non-zero) singular value of any rectangular matrix $A$ by $\lambda_{\max}(A)$, $\lambda_{\min}(A)$ and $\lambda_i(A)$ respectively. The condition number is defined as $\text{cond}(A) = \tfrac{\lambda_{max}(A)}{\lambda_{min}(A)}$.

Finally, we use the notation $A = \Omega(B)$ if $\frac{B}{A} = \O(1)$, i.e.\ if $A$ has asymptotically at least the same rate as $B$ and $A \asymp B$ if $A$ and $B$ have asymptotically the same rate. $A = \O_p(B)$ means that there exists a constant $c > 0$ such that $\P(A > cB) \to 0$ and $\Omega_p$ is defined analogously. 

\subsection{Main result for the confounding model}
We present here the main result for the confounding model (\ref{confounding_model}), which we derive below by considering the relationship with the corresponding perturbed linear model, as described in Section \ref{model}.

\begin{restatable}{theorem}{optimalrate}\label{optimalrate}
Consider the model in (\ref{confounding_model}) with $\max_{i} \Sigma_{ii} = \O(1)$ and $\text{cond}(\Sigma_E) = \O(1)$ and suppose that $\lambda_{\min}(\Sigma)$ is bounded away from zero. Assume that the model satisfies
\begin{itemize}
\setlength{\leftskip}{1cm}
\item[\textbf{(A1)}] $\lambda_{\min}(\Gamma) = \lambda_{\min}(\Cov(X, H)) = \Omega(\sqrt{p})$. 
\end{itemize}
\setlength{\leftskip}{0cm}
Assume additionally that a spectral transformation $F$ in (\ref{transformed_lasso}) with 
$\lambda_{\max}(F) = 1$ satisfies
\begin{itemize}
\setlength{\leftskip}{1cm}
\item[\textbf{(A2)}] $\lambda_{\max}(\tilde{X}) = \O_p(\sqrt{p})$
\item[\textbf{(A3)}] $\phi_{\tilde{\Sigma}}^2 = \Omega_p(\lambda_{\min}(\Sigma))$.
\end{itemize}
\setlength{\leftskip}{0cm}
Then for the penalty level $\lambda \asymp \sigma\sqrt{\frac{\log p}{n}}$, despite the confounding variables, the $\ell_1$-estimation error has the following rate:
$$\|\hat{\beta} - \beta\|_1  = \O_p\left(\frac{\sigma s}{\lambda_{\min}(\Sigma)}\sqrt{\frac{\log p}{n}}\right).$$
\end{restatable}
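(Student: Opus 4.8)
The plan is to derive Theorem~\ref{optimalrate} from the general perturbed-model analysis by feeding in the confounding-specific structure of the bias and the design. First I would use the reduction from Section~\ref{model}: the confounding model (\ref{confounding_model}) is an instance of the perturbed model (\ref{perturbed_model}) with bias $b = \Sigma^{-1}\Cov(X,H)\delta = \Sigma^{-1}\Gamma^T\delta$ from (\ref{perturbation_formula}), with error $\epsilon = H\delta - Xb + \nu$ that is independent of $X$ by joint Gaussianity and sub-Gaussian with $\sigma^2 \le \norm{\delta}_2^2 + \sigma_\nu^2$. Applying the spectral transformation gives $\tilde Y = \tilde X\beta + \tilde X b + \tilde\epsilon$ with $\tilde X = FX$ and $\tilde\epsilon = F\epsilon$, and I would then invoke the general bound for the perturbed model (Theorem~\ref{general_bound}). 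Its proof is a Lasso basic-inequality argument on the transformed data: on the event $\lambda \ge 2\big(\tfrac1n\norm{\tilde X^T\tilde\epsilon}_\infty + \tfrac1n\norm{\tilde X^T\tilde X b}_\infty\big)$ the error $\Delta=\hat\beta-\beta$ satisfies the cone condition $\norm{\Delta}_1 \le 5\norm{\Delta_S}_1$ underlying the definition of $\phi_{\tilde\Sigma}$, so the restricted-eigenvalue constant applies and yields $\norm{\Delta}_1 = \O_p\!\big(s\lambda/\phi_{\tilde\Sigma}^2\big)$.

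The crux is then to show that $\lambda \asymp \sigma\sqrt{\log p/n}$ dominates both gradient terms. For the noise term, the normalisation $\lambda_{\max}(F)=1$ keeps $\tilde\epsilon = F\epsilon$ sub-Gaussian at level $\le\sigma$ conditionally on $X$, while $\norm{\tilde X_j}_2 = \norm{F X_j}_2 \le \norm{X_j}_2 = \O_p(\sqrt n)$ uniformly in $j$ because $\max_i\Sigma_{ii}=\O(1)$; a union bound over the $p$ coordinates gives $\tfrac1n\norm{\tilde X^T\tilde\epsilon}_\infty = \O_p\!\big(\sigma\sqrt{\log p/n}\big)\asymp\lambda$. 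For the perturbation term I would combine the bias bound $\norm{b}_2 = \O(\norm{\delta}_2/\sqrt p)$ from Lemma~\ref{perturbationsize} (which is where assumption (A1) enters) with (A2): then $\norm{\tilde X b}_2 \le \lambda_{\max}(\tilde X)\norm{b}_2 = \O_p(\sqrt p)\cdot\O(\norm{\delta}_2/\sqrt p) = \O_p(\norm{\delta}_2)$, and hence $\tfrac1n\norm{\tilde X^T\tilde X b}_\infty \le \tfrac1n\max_j\norm{\tilde X_j}_2\,\norm{\tilde X b}_2 = \O_p(\norm{\delta}_2/\sqrt n)$, which is of smaller order than $\lambda$ as $p\to\infty$. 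With both gradient pieces controlled, applying (A3), $\phi_{\tilde\Sigma}^2 = \Omega_p(\lambda_{\min}(\Sigma))$, turns $\O_p(s\lambda/\phi_{\tilde\Sigma}^2)$ into the claimed $\O_p\!\big(\tfrac{\sigma s}{\lambda_{\min}(\Sigma)}\sqrt{\log p/n}\big)$.

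I expect the main obstacle to be the perturbation control, concentrated in the estimate $\norm{b}_2=\O(\norm{\delta}_2/\sqrt p)$. The naive bound $\norm{b}_2^2 \le b^T\Sigma b/\lambda_{\min}(\Sigma) \le \norm{\delta}_2^2/\lambda_{\min}(\Sigma)$, using $\Gamma\Sigma^{-1}\Gamma^T\preceq I_q$, only gives $\O(\norm{\delta}_2)$ and discards the decisive factor $\sqrt p$. Recovering it requires exploiting the dense-loading assumption (A1) through a Sherman--Morrison--Woodbury expansion of $\Sigma^{-1}=(\Gamma^T\Gamma+\Sigma_E)^{-1}$: writing $\Gamma^T = BSA^T$ for the SVD, the expansion collapses to $b = \Sigma_E^{-1}B(S^{-2}+B^T\Sigma_E^{-1}B)^{-1}S^{-1}A^T\delta$, and since $\text{cond}(\Sigma_E)=\O(1)$ together with $\max_i\Sigma_{ii}=\O(1)$ forces $\Sigma_E$ to have eigenvalues of constant order (so the middle factor is $\O(1)$ in operator norm), the factor $S^{-1}$ with $\lambda_{\min}(S)=\lambda_{\min}(\Gamma)=\Omega(\sqrt p)$ supplies exactly the $1/\sqrt p$. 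A secondary technical point, handled by conditioning on $X$ throughout the noise concentration, is that the transformation $F$ is itself data-dependent.
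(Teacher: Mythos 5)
Your proposal is correct and shares the paper's overall architecture: reduce the confounding model to the perturbed model via $b=\Sigma^{-1}\Gamma^T\delta$, bound $\|b\|_2$ by exploiting $\lambda_{\min}(\Gamma)=\Omega(\sqrt p)$ exactly as in Lemma \ref{perturbationsize} (your Woodbury/SVD collapse is the same computation as the paper's, and note only $\text{cond}(\Sigma_E)$ is needed there, not constant-order eigenvalues of $\Sigma_E$), and then feed \textbf{(A2)} and \textbf{(A3)} into a Lasso basic inequality on the transformed data, handling the data-dependence of $F$ by conditioning on $X$ just as in Lemma \ref{proof:tau}. The one genuine divergence is how the bias enters the basic inequality. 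You absorb $\tilde Xb$ into the effective noise and ask the penalty to dominate $\tfrac2n\|\tilde X^T\tilde Xb\|_\infty$ in addition to $\tfrac2n\|\tilde X^T\tilde\epsilon\|_\infty$; the paper instead completes the square, rewriting the cross term so that the bias contributes only an additive $\tfrac1n\|\tilde Xb\|_2^2$, then splits into cases according to whether $\tfrac1n\|\tilde Xb\|_2^2\le\lambda\|\hat\beta_S-\beta_S\|_1$, arriving at Theorem \ref{general_bound}'s bound $C_1 s\lambda/\phi_{\tilde\Sigma}^2+C_2\|\tilde Xb\|_2^2/(n\lambda)$ with a penalty calibrated to the stochastic noise alone. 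The paper's version is the more robust intermediate statement: it tolerates $\|\tilde Xb\|_2^2=\O(s\sigma^2\log p)$, which is all that \textbf{(A1')} guarantees, whereas your gradient condition via Cauchy--Schwarz needs $\|\tilde Xb\|_2\lesssim\sqrt n\,\lambda$, i.e.\ $\|\tilde Xb\|_2^2\lesssim\sigma^2\log p$, stronger by a factor of $s$ --- under \textbf{(A1')} alone your route would force a $\sqrt s$-inflated penalty and an $s^{3/2}$ rate. For Theorem \ref{optimalrate} this is harmless, because Lemma \ref{perturbationsize} actually delivers the stronger bound $\|b\|_2^2=\O(\sigma^2/p)$, so $\tfrac1n\|\tilde X^T\tilde Xb\|_\infty=\O_p(\sigma/\sqrt n)=o_p(\lambda)$ as you compute; both your final step and the paper's lemma rely on the same implicit comparability $\|\delta\|_2=\O(\sigma)$.
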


The assumption \textbf{(A1)} means that the confounding is dense in the sense that each confounding variable is correlated with many predictors:
The condition $\lambda_{\min}(\Gamma) = \Omega_p(\sqrt{p})$ is satisfied, for example, if $\frac{q}{p} \to 0$ and $\Gamma$ is drawn at random with either rows or columns of $\Gamma$ being independent, identically distributed sub-Gaussian random vectors, as shown in Lemma \ref{perturbationsize}.

We also show in Section \ref{justification} that certain choices of the spectral transformation, such as the Trim transform (\ref{trim}) with $\tau = d_{\floor{tn}}$, where $t \in (0, 1)$ is an arbitrary constant, or the PCA adjustment, which maps first several singular values to zero, satisfy with high probability the conditions \textbf{(A2)} and \textbf{(A3)} in the high-dimensional setting under certain conditions.

\subsection{\texorpdfstring{$\ell_1$}{l1}-estimation error of $\beta$ in the perturbed linear model} \label{bounds}
In this section we derive an upper bound for the $\ell_1$-estimation error of $\beta$ in the perturbed linear model and show that we can achieve the usual Lasso error rate in the high-dimensional case, provided the perturbation $b$ is sufficiently small. Then the main theorem for the confounding model, Theorem \ref{optimalrate}, follows from Corollary \ref{optimal_rate_perturbed} by using the relationship between the models described in Section \ref{model}.

The following result describes the effect of an arbitrary linear transformation $F$ on the $\ell_1$-estimation error of the Lasso:
\begin{restatable}{theorem}{generalbound} \label{general_bound}
Assume the model in (\ref{perturbed_model}) with $\max_{i} \Sigma_{ii} = \O(1)$. Let $F \in \R^{n \times n}$ be an arbitrary linear transformation and $A>0$ an arbitrary fixed constant. Then for the method described in $(\ref{transformed_lasso})$ with transformation $F$ and penalty level $\lambda = A\sigma\sqrt{\frac{\log p}{n}}\lambda_{\max}(F)^2$, with probability at least $1 - 2p^{1-A^2/(32\max_i \Sigma_{ii})} - pe^{-n/136}$, we have
$$\|\hat{\beta} - \beta\|_1 \leq C_1\frac{s\lambda}{\phi_{\tilde{\Sigma}}^2}  + C_2\frac{\|\tilde{X}b\|_2^2}{n\lambda} ,$$
where $C_1, C_2$ are constants depending only on $A$.
\end{restatable}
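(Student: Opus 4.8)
The plan is to prove Theorem \ref{general_bound} by adapting the standard Lasso oracle-inequality argument (as in \citet{bickel2009simultaneous} or \citet{buhlmann2011statistics}) to the transformed data $(\tilde{X}, \tilde{Y})$, treating the perturbation term $\tilde{X}b$ as an additional source of error that must be controlled separately. The starting point is the basic inequality coming from the optimality of $\hat{\beta}$ in the transformed Lasso problem \eqref{transformed_lasso}. Since $\hat{\beta}$ minimizes $\frac{1}{n}\|\tilde{Y}-\tilde{X}\beta\|_2^2 + \lambda\|\beta\|_1$, comparing the objective at $\hat{\beta}$ against its value at the true sparse vector $\beta$ yields, after writing $\tilde{Y} = \tilde{X}\beta + \tilde{X}b + \tilde{\epsilon}$ with $\tilde{\epsilon} = F\epsilon$ and expanding the squares,
\begin{equation*}
\tfrac{1}{n}\|\tilde{X}(\hat{\beta}-\beta)\|_2^2 + \lambda\|\hat{\beta}\|_1 \leq \tfrac{2}{n}\langle \tilde{\epsilon}, \tilde{X}(\hat{\beta}-\beta)\rangle + \tfrac{2}{n}\langle \tilde{X}b, \tilde{X}(\hat{\beta}-\beta)\rangle + \lambda\|\beta\|_1 .
\end{equation*}
The two inner-product terms on the right are the stochastic (noise) term and the deterministic (perturbation bias) term, respectively.

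\textbf{Controlling the noise term.} The first inner product is handled by the usual event on which $\lambda$ dominates the empirical gradient, namely $\frac{2}{n}\|\tilde{X}^T\tilde{\epsilon}\|_\infty \leq \lambda/2$, say. The key point is that $\tilde{\epsilon} = F\epsilon$ is still sub-Gaussian (being a linear image of the sub-Gaussian $\epsilon$, independent of $X$), and each column of $\tilde{X} = FX$ has squared norm controlled by $\lambda_{\max}(F)^2$ together with $\max_i\Sigma_{ii} = \O(1)$. This is precisely why the stated penalty level carries the factor $\lambda_{\max}(F)^2$: a union bound over the $p$ coordinates of $\tilde{X}^T\tilde{\epsilon}$, combined with a concentration bound for the column norms of $\tilde{X}$, produces the probability $1 - 2p^{1-A^2/(32\max_i\Sigma_{ii})} - pe^{-n/136}$. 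I would first establish a high-probability bound $\frac{1}{n}\|\tilde{x}_j\|_2^2 = \O(\lambda_{\max}(F)^2)$ uniformly in $j$ (the $pe^{-n/136}$ term), then condition on $X$ and apply a Gaussian/sub-Gaussian tail bound to each $\langle\tilde{\epsilon},\tilde{x}_j\rangle$ (the $2p^{1-A^2/\cdots}$ term).

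\textbf{Controlling the perturbation term and invoking the restricted eigenvalue.} On this good event, the basic inequality is manipulated in the standard way: moving $\lambda\|\hat\beta\|_1$ and using $\|\beta\|_1 - \|\hat\beta\|_1 \le \|(\hat\beta-\beta)_S\|_1 - \|(\hat\beta-\beta)_{S^c}\|_1$, one shows the error $\alpha := \hat\beta-\beta$ lands in the cone $\|\alpha\|_1 \le 5\|\alpha_S\|_1$ so that the compatibility constant $\phi_{\tilde\Sigma}$ applies. The one genuinely new ingredient is the perturbation inner product, which by Cauchy–Schwarz obeys $\frac{2}{n}\langle\tilde{X}b,\tilde{X}\alpha\rangle \le \frac{2}{n}\|\tilde{X}b\|_2\|\tilde{X}\alpha\|_2$; applying the inequality $2uv \le u^2 + v^2$ (with a suitable scaling) splits this into a term $\frac{1}{n}\|\tilde{X}\alpha\|_2^2$ that is absorbed into the left-hand side prediction norm and a residual term of order $\frac{1}{n}\|\tilde{X}b\|_2^2$. \textbf{The main obstacle} is bookkeeping the constants so that, after this absorption and after bounding $\frac{1}{\sqrt s}\|\alpha_S\|_1 \le \frac{1}{\phi_{\tilde\Sigma}}\sqrt{\frac{1}{n}\|\tilde X\alpha\|_2^2}$ via (A3)-type compatibility, the leftover perturbation mass is converted into an $\ell_1$ bound without the $\|\tilde{X}\alpha\|_2^2$ term reappearing — this is what yields the additive structure $C_1\frac{s\lambda}{\phi_{\tilde\Sigma}^2} + C_2\frac{\|\tilde X b\|_2^2}{n\lambda}$, where the perturbation contribution is divided by $\lambda$ precisely because of the $2uv$ splitting step. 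I would finish by collecting the two contributions and reading off $C_1, C_2$ as explicit functions of $A$.
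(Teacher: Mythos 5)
Your outline follows the paper's proof almost step for step: the same basic inequality from the optimality of $\hat{\beta}$, the same event $\{\tfrac{2}{n}\|\tilde{X}^T\tilde{\epsilon}\|_\infty \le \lambda/2\}$ with its probability obtained exactly as you describe (Bernstein's inequality for the column norms of $X$ giving the $pe^{-n/136}$ term, then a conditional sub-Gaussian tail bound plus union bound giving the $2p^{1-A^2/(32\max_i\Sigma_{ii})}$ term), and the same cone-plus-compatibility endgame. Two points in your treatment of the perturbation term need repair, however. First, if you bound $\tfrac{2}{n}\langle\tilde{X}b,\tilde{X}\alpha\rangle$ by Cauchy--Schwarz and then apply $2uv\le u^2+v^2$ with $u=\|\tilde{X}b\|_2/\sqrt{n}$ and $v=\|\tilde{X}\alpha\|_2/\sqrt{n}$, the entire prediction term $\tfrac{1}{n}\|\tilde{X}\alpha\|_2^2$ on the left is cancelled; but a quadratic term on the left is still needed later, because the compatibility bound $\|\alpha_S\|_1\le\tfrac{\sqrt{s}}{\phi_{\tilde{\Sigma}}}\cdot\tfrac{\|\tilde{X}\alpha\|_2}{\sqrt{n}}$ reintroduces $\|\tilde{X}\alpha\|_2$ on the right, and it must be absorbed via $xy\le x^2/4+y^2$. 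So, contrary to your claim, the $\|\tilde{X}\alpha\|_2^2$ term does reappear. The paper sidesteps this by completing the square instead: the cross term is absorbed into $\tfrac{1}{n}\|\tilde{X}(\hat{\beta}-\beta-b)\|_2^2$, which stays on the left and is exactly what soaks up the compatibility step after the triangle inequality $\|\tilde{X}\alpha\|_2\le\|\tilde{X}(\alpha-b)\|_2+\|\tilde{X}b\|_2$; a weighted Young inequality retaining, say, $\tfrac{1}{2n}\|\tilde{X}\alpha\|_2^2$ would also work. Second, the error does not unconditionally land in the cone: the basic inequality only gives $\|\alpha_{S^c}\|_1\lesssim\|\alpha_S\|_1+\tfrac{1}{n\lambda}\|\tilde{X}b\|_2^2$, so you need the paper's case distinction on whether $\tfrac{1}{n}\|\tilde{X}b\|_2^2\le\lambda\|\hat{\beta}_S-\beta_S\|_1$; in the opposite case the claimed bound follows immediately because the perturbation term dominates everything. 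A minor point of attribution: the $1/\lambda$ in the second term of the final bound comes from dividing the inequality for $(\lambda-\tau)\|\hat{\beta}-\beta\|_1$ by $\lambda-\tau=\lambda/2$ at the very end, not from the splitting step itself.
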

\begin{remark}
One can get a better bound 
$$\|\hat{\beta} - \beta\|_1 \leq C_1\frac{s\lambda}{\phi_{\tilde{\Sigma}}^2}  + C_2\frac{\sqrt{s}}{\phi_{\tilde{\Sigma}}}\frac{\norm{\tilde{X}b}_2}{\sqrt{n}}$$
by taking larger penalty $\lambda$ than the one above, but then $\lambda$ depends on the unknown quantity $\norm{\tilde{X}b}_2$. For that reason we will use the bound above with standard penalty level $\lambda$, since it does not matter when $\norm{\tilde{X}b}_2$ is small, which holds in our case, as shown later.
\end{remark}

The first term is the standard bound for the $\ell_1$-error of the Lasso, with only difference that the compatibility constant is for the matrix $\tilde{\Sigma} = \frac{\tilde{X}^T\tilde{X}}{n}$ rather than the matrix $\hat{\Sigma} = \frac{X^TX}{n}$. The second term shows the dependence of the error on the term $\tilde{X}b$. It is also worth noting that the penalty $\lambda$ has standard form up to the scaling correction factor $\lambda_{\max}(F)^2$, which equals $1$ for the Trim transform and the PCA adjustment.

In order to control the error caused by the coefficient perturbation $b$, we need to make $\norm{\tilde{X}b}_2$ small by shrinking the singular values enough, e.g. by ensuring that $\tilde{d}_{(1)}$, the largest singular value after transformation, is small. On the other hand, we must not shrink the singular values too much, since we need $\phi_{\tilde{\Sigma}}$ to stay large. If we have that $\phi^2_{\tilde{\Sigma}}$ is bounded away from $0$ with high probability, as it is the case with $\phi^2_{\hat{\Sigma}}$ (see \citet{buhlmann2011statistics}), and that $\norm{\tilde{X}b}_2$ is sufficiently small, we get from Theorem \ref{general_bound} that our estimator achieves the usual Lasso error rate:

\begin{restatable}{corollary}{optimalrateperturbed}\label{optimal_rate_perturbed}
Consider the model in (\ref{perturbed_model}) with $\max_{i} \Sigma_{ii} = \O(1)$ and suppose that $\lambda_{\min}(\Sigma)$ is bounded away from zero. For the coefficient perturbation $b$ as in (\ref{perturbation_formula}), assume that
\begin{itemize}
\setlength{\leftskip}{1cm}
\item[\textbf{(A1')}] $\|b\|_2^2 = \O\left(\frac{s\sigma^2 \log p}{p}\right)$.
\end{itemize}
\setlength{\leftskip}{0cm}
Assume additionally that the spectral transformation $F$ in (\ref{transformed_lasso}) with 
$\lambda_{\max}(F) = 1$ satisfies
\begin{itemize}
\setlength{\leftskip}{1cm}
\item[\textbf{(A2)}] $\lambda_{\max}(\tilde{X}) = \O_p(\sqrt{p})$
\item[\textbf{(A3)}] $\phi_{\tilde{\Sigma}}^2 = \Omega_p(\lambda_{\min}(\Sigma)).$
\end{itemize}
\setlength{\leftskip}{0cm}
Then for the penalty level $\lambda \asymp \sigma\sqrt{\frac{\log p}{n}}$, despite the coefficient perturbation, the $\ell_1$-estimation error has the following rate:
$$\|\hat{\beta} - \beta\|_1  = \O_p\left(\frac{\sigma s}{\lambda_{\min}(\Sigma)}\sqrt{\frac{\log p}{n}}\right).$$
\end{restatable}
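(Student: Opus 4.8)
The plan is to derive the corollary as a direct specialization of Theorem~\ref{general_bound}, controlling its two error terms separately under the stated penalty level $\lambda \asymp \sigma\sqrt{\log p / n}$. First I would fix the free constant $A$ in Theorem~\ref{general_bound} to be a sufficiently large absolute constant: since $\max_i \Sigma_{ii} = \O(1)$, one can pick $A$ with $A^2/(32\max_i\Sigma_{ii}) > 1$, so that $2p^{1 - A^2/(32\max_i\Sigma_{ii})} \to 0$; together with the standard high-dimensional scaling $\log p = o(n)$, which sends $pe^{-n/136}\to 0$, the event on which the bound of Theorem~\ref{general_bound} holds has probability tending to one. On that event it remains to show that each of the two summands $C_1 s\lambda/\phi_{\tilde{\Sigma}}^2$ and $C_2\|\tilde{X}b\|_2^2/(n\lambda)$ is of order $\O_p\!\big(\tfrac{\sigma s}{\lambda_{\min}(\Sigma)}\sqrt{\log p/n}\big)$.

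The first summand is immediate from assumption \textbf{(A3)}, which gives $1/\phi_{\tilde{\Sigma}}^2 = \O_p(1/\lambda_{\min}(\Sigma))$. Substituting $\lambda \asymp \sigma\sqrt{\log p/n}$ yields
$$\frac{s\lambda}{\phi_{\tilde{\Sigma}}^2} = \O_p\!\left(\frac{\sigma s}{\lambda_{\min}(\Sigma)}\sqrt{\frac{\log p}{n}}\right),$$
which is precisely the usual Lasso $\ell_1$-rate, now with the compatibility constant of $\tilde{\Sigma}$ in place of that of $\hat{\Sigma}$.

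The crux is the perturbation summand, and showing that the interplay between $\lambda$ and the size of $b$ keeps it from dominating. I would bound $\|\tilde{X}b\|_2 \le \lambda_{\max}(\tilde{X})\,\norm{b}_2$ and combine \textbf{(A2)}, i.e.\ $\lambda_{\max}(\tilde{X}) = \O_p(\sqrt{p})$, with \textbf{(A1')}, i.e.\ $\norm{b}_2^2 = \O(s\sigma^2\log p/p)$, to get $\|\tilde{X}b\|_2^2 = \O_p(p)\cdot\O(s\sigma^2\log p/p) = \O_p(s\sigma^2\log p)$. Since $n\lambda \asymp \sigma\sqrt{n\log p}$, this gives
$$\frac{\|\tilde{X}b\|_2^2}{n\lambda} = \O_p\!\left(\frac{s\sigma^2\log p}{\sigma\sqrt{n\log p}}\right) = \O_p\!\left(\sigma s\sqrt{\frac{\log p}{n}}\right).$$
As $\lambda_{\min}(\Sigma)\le\max_i\Sigma_{ii}=\O(1)$ forces $1/\lambda_{\min}(\Sigma)=\Omega(1)$, this is dominated by the first summand, and the two together give the claimed rate. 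The main obstacle is exactly this second term: the choice $\lambda\asymp\sigma\sqrt{\log p/n}$ and the density condition \textbf{(A1')} on $\norm{b}_2$ are what make the factor $p$ in $\lambda_{\max}(\tilde{X})^2$ cancel the $1/p$ in $\norm{b}_2^2$, leaving a contribution of the same order as the standard Lasso error rather than a larger one. Establishing this cancellation is the heart of the argument; the remaining steps are routine substitutions into Theorem~\ref{general_bound}.
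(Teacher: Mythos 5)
Your proposal is correct and follows essentially the same route as the paper: plug the stated $\lambda$ into Theorem~\ref{general_bound}, bound the first term via \textbf{(A3)} and the second via $\|\tilde{X}b\|_2 \le \lambda_{\max}(\tilde{X})\|b\|_2$ combined with \textbf{(A2)} and \textbf{(A1')}, so that the factor $p$ cancels the $1/p$ and the perturbation term is dominated by the standard Lasso term. Your added remarks on choosing $A$ large enough and on $1/\lambda_{\min}(\Sigma)=\Omega(1)$ make explicit details the paper leaves implicit, but the argument is the same.
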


We show in the following section that in the perturbed linear model that arises from the confounding model (\ref{confounding_model}), the induced coefficient perturbation $b$, given in (\ref{perturbation_formula}), satisfies the condition \textbf{(A1')}, provided that the dense confounding assumption \textbf{(A1)} is satisfied.
We also show that certain spectral transformations, such as the Trim transform (\ref{trim}) with $\tau = d_{\floor{tn}}$, where $t \in (0, 1)$ is an arbitrary constant, or the PCA adjustment satisfy the conditions \textbf{(A2)} and \textbf{(A3)} under certain conditions.

\begin{remark} [Fixed design]
The results of Theorem \ref{general_bound} and Corollary \ref{optimal_rate_perturbed} can be easily extended to the perturbed linear model with fixed design. One can even relax the assumption \textbf{(A1')} to a weaker condition \[\norm{V^Tb}_2^2 = \O\left(\frac{s\sigma^2\log p}{p}\right).\]
It is worth noting that if the perturbation vector $b$ has uniformly random direction, which is not the case with the confounding model (\ref{confounding_model}), this becomes much weaker than the condition \textbf{(A1')} above and we only require $\norm{b}_2^2 = \O\left(\frac{s\sigma^2\log p}{n}\right)$.
\end{remark}

\subsection{Validity of the assumptions} \label{justification}
In this section we will justify the assumptions in Theorem \ref{optimalrate} and Corollary \ref{optimal_rate_perturbed} for certain spectral transformations $F$, with an emphasis on the Trim transform (\ref{trim}) and the PCA adjustment. We also discuss later the performance of other choices of spectral transformations.

\subsection*{Assumptions \textbf{(A1)} and \textbf{(A1')}} 
The assumption \textbf{(A1')} for the perturbed linear model says that the coefficient perturbation must not be too large. It can also be viewed as the condition which makes the perturbed linear model identifiable, since in general it is impossible to distinguish the true coefficient vector $\beta$ from the perturbed coefficient vector $\beta + b$, unless $b$ has some additional structure. The rate $\O(\sqrt{ s \sigma^2 \log p / p})$ may seem too strict, but this is the rate with respect to the $\ell_2$-norm, so if the perturbation vector is dense, this becomes approximately $\|b\|_1 = \O(\sqrt{s \sigma^2 \log p})$.

The following lemma shows that if the confounding is dense in the confounding model (the assumption \textbf{(A1)} holds), then the induced coefficient perturbation in the underlying perturbed linear model is small (the assumption \textbf{(A1')} holds). It is important to note that certain dense confounding assumption is necessary. The term $Xb$ can be thought of as the part of the confounding $H\delta$ that can be explained by $X$ and if, as an extreme example, the confounder $H_i$ is correlated with only the predictor $X_j$, only the $j$-th component of $X$ will be useful for describing the effect of $H_i$ on $Y$ and thus $b_j$ will be very large and we will not be able to estimate $\beta_j$.

\begin{restatable}{lemma}{perturbationsize} \label{perturbationsize}
Assume that the confounding model (\ref{confounding_model}) satisfies $\lambda_{\min}(\Gamma) = \lambda_{\min}(\Cov(H, X)) = \Omega\left(\sqrt{p}\right)$ and $\text{cond}(\Sigma_E) = \O(1)$. Then we have:
$$\|b\|_2^2 = \|\Cov(X)^{-1}\Cov(X, H)\delta\|_2^2 \leq \text{cond}(\Sigma_E)\cdot\frac{\norm{\delta}_2^2}{\lambda_{\min}(\Gamma)^2} = \O\left(\frac{\norm{\delta}_2^2}{p}\right) = \O\left(\frac{\sigma^2}{p}\right)$$
The condition $\lambda_{\min}(\Gamma) = \Omega_p(\sqrt{p})$ is satisfied, for example, if $\frac{q}{p} \to 0$ and $\Gamma$ is drawn at random with either its rows or columns being independent, identically distributed sub-Gaussian random variables with expectation $0$ and covariance matrix $\Sigma_\Gamma$, with $\lambda_{\min}(\Sigma_\Gamma)$ bounded away from zero.
\end{restatable}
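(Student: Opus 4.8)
The plan is to reduce the whole statement to a clean operator-norm inequality. Using $\Cov(H)=I_q$, $\Cov(H,X)=\Gamma$ and the decomposition $X=H\Gamma+E$ with $\Cov(H,E)=0$, one has $\Sigma=\Cov(X)=\Gamma^T\Gamma+\Sigma_E$ and $\Cov(X,H)=\Gamma^T$, so that by (\ref{perturbation_formula}) the perturbation is $b=\Sigma^{-1}\Gamma^T\delta=(\Gamma^T\Gamma+\Sigma_E)^{-1}\Gamma^T\delta$. The goal is therefore to bound the operator norm $\|(\Gamma^T\Gamma+\Sigma_E)^{-1}\Gamma^T\|_2$ by $\sqrt{\text{cond}(\Sigma_E)}/\lambda_{\min}(\Gamma)$; squaring and multiplying by $\|\delta\|_2^2$ then gives the first displayed inequality.

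The main difficulty is that $\Gamma^T\Gamma$ and $\Sigma_E$ are not simultaneously diagonalizable, so one cannot simply read off the singular values of the product, and the $\text{cond}(\Sigma_E)$ factor has to be extracted by hand. I would resolve this by ``whitening'' with $\Sigma_E$. Writing $B:=\Gamma\Sigma_E^{-1/2}\in\R^{q\times p}$, one has the exact factorization $\Sigma=\Sigma_E^{1/2}(B^TB+I_p)\Sigma_E^{1/2}$, and since $\Sigma_E^{-1/2}\Gamma^T=B^T$ this yields the clean ridge form
\[ b=\Sigma_E^{-1/2}(B^TB+I_p)^{-1}B^T\delta. \]
The middle factor is diagonalizable through the SVD of $B$: if $\tilde{s}_1\ge\dots\ge\tilde{s}_q$ are the singular values of $B$, then $\|(B^TB+I_p)^{-1}B^T\|_2=\max_i \tilde{s}_i/(\tilde{s}_i^2+1)\le 1/\tilde{s}_q=1/\lambda_{\min}(B)$. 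Combining with $\|\Sigma_E^{-1/2}\|_2=\lambda_{\min}(\Sigma_E)^{-1/2}$ gives $\|b\|_2\le \|\delta\|_2/(\lambda_{\min}(\Sigma_E)^{1/2}\lambda_{\min}(B))$. Finally, for every $u\in\R^q$ one has $\|B^Tu\|_2=\|\Sigma_E^{-1/2}\Gamma^Tu\|_2\ge \lambda_{\max}(\Sigma_E)^{-1/2}\lambda_{\min}(\Gamma)\|u\|_2$, so $\lambda_{\min}(B)\ge\lambda_{\min}(\Gamma)/\sqrt{\lambda_{\max}(\Sigma_E)}$. Substituting produces exactly $\|b\|_2\le\sqrt{\text{cond}(\Sigma_E)}\,\|\delta\|_2/\lambda_{\min}(\Gamma)$, which is the stated bound. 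The rate $\|b\|_2^2=\O(\|\delta\|_2^2/p)=\O(\sigma^2/p)$ then follows by plugging in $\lambda_{\min}(\Gamma)=\Omega(\sqrt p)$ and $\text{cond}(\Sigma_E)=\O(1)$, together with the noise relation $\sigma^2\le\|\delta\|_2^2+\sigma_\nu^2$ from Section \ref{model}, under which $\|\delta\|_2$ and $\sigma$ are of the same order in the regime of interest.

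For the second claim I would establish $\lambda_{\min}(\Gamma)=\Omega_p(\sqrt p)$ via a standard non-asymptotic bound on the smallest singular value of a sub-Gaussian random matrix. In the ``rows'' case write $\Gamma=Z\Sigma_\Gamma^{1/2}$ with $Z\in\R^{q\times p}$ having i.i.d.\ isotropic sub-Gaussian rows; then $\lambda_{\min}(\Gamma)\ge\lambda_{\min}(Z)\sqrt{\lambda_{\min}(\Sigma_\Gamma)}$, and the classical estimate $\lambda_{\min}(Z)\ge\sqrt p-C(\sqrt q+t)$ (valid since $q\le p$) gives $\lambda_{\min}(Z)=\Omega_p(\sqrt p)$ once $q/p\to0$, so $\lambda_{\min}(\Gamma)=\Omega_p(\sqrt p)$ because $\lambda_{\min}(\Sigma_\Gamma)$ is bounded away from $0$. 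The ``columns'' case is symmetric: write $\Gamma=\Sigma_\Gamma^{1/2}\tilde{Z}$ with $\tilde{Z}$ having i.i.d.\ isotropic sub-Gaussian columns, bound $\lambda_{\min}(\tilde{Z}\tilde{Z}^T)=\Omega_p(p)$ via concentration of the sample covariance $\tfrac1p\tilde{Z}\tilde{Z}^T\to I_q$ (again using $q/p\to0$), and conclude as before. The real obstacle is the first part---isolating the $\text{cond}(\Sigma_E)$ factor despite the non-commuting summands---and the whitening factorization is the device that makes it fall out exactly; the random-matrix step is routine provided one uses a genuine smallest-singular-value bound, since a Gershgorin argument alone would be too weak when $q$ is allowed to grow.
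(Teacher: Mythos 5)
Your proposal is correct and follows essentially the same route as the paper's own proof: the paper also whitens by setting $C=\Gamma\Sigma_E^{-1/2}$ (your $B$), rewrites $b=\Sigma_E^{-1/2}(C^TC+I_p)^{-1}C^T\delta$, bounds the middle factor via the SVD and $\max_i s_i/(s_i^2+1)\le 1/\lambda_{\min}(C)$, and extracts $\text{cond}(\Sigma_E)$ from $\lambda_{\min}(C)\ge\lambda_{\min}(\Sigma_E^{-1/2})\lambda_{\min}(\Gamma)$. The random-matrix step is likewise identical in spirit (the paper cites Vershynin's Theorems 5.39 and 5.58 for the rows and columns cases, respectively).
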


From this we see that it is important that the effect of the latent variables is spread out over many predictors. If this is not true, $\lambda_{\min}(\Gamma)$ will be too small and thus $\norm{b}_2$ will be too large.

\subsection*{Assumption \textbf{(A2)}}
We investigate quickly the behaviour of singular values of $X$ in order to see whether the assumption \textbf{(A2)} holds for the transformed matrix $\tilde{X}$. This assumption says that after the transformation, the largest singular value is not too large.

In the confounding model we have $\Sigma = \Gamma^T\Gamma + \Sigma_E$, i.e. the covariance matrix of $X$ has additional low-rank component $\Gamma^T\Gamma$, which causes the top several singular values of $\Sigma$ to be very large. Since the rows of $X$ are drawn from a distribution with covariance matrix $\Sigma$, the first few singular values of $X$ will be large as well \citep{donoho2013optimal}. However, the following lemma shows that the bulk of the singular values will never be too large, i.e. they will be of order $\sqrt{p}$. The assumption \textbf{(A2)} requires the transformed singular values to be of this order. 

\begin{restatable}{lemma}{averagesingularvalue} \label{averagesingularvalue}
Assume that $X \in \R^{n\times p}$ is a random matrix whose rows are i.i.d. sub-Gaussian vectors with covariance matrix $\Sigma$. Let $d_1,\ldots, d_r \geq 0$ be its singular values. Assume also that $\Tr(\Sigma) \asymp p$ and that $\sqrt{\log p /n} \to 0$. We have:
$$\frac{1}{n}\sum_{i=1}^r d_i^2 = \Tr(\Sigma)(1 + o_p(1)).$$
Furthermore, when $p > n$, $d_{\floor{tn}} = \O_p(\sqrt{p})$ for any $t \in (0, 1)$.
\end{restatable}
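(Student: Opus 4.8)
The plan is to prove the two assertions in turn, the second being an almost immediate consequence of the first. The starting point for the identity on $\frac1n\sum_i d_i^2$ is the elementary fact that the sum of squared singular values is a trace of the Gram matrix, which can be read off row-wise:
$$\sum_{i=1}^r d_i^2 = \Tr(X^T X) = \Tr(X X^T) = \sum_{k=1}^n \norm{X_k}_2^2,$$
where $X_k \in \R^p$ is the $k$-th row of $X$. Thus $\frac1n \sum_i d_i^2$ is the empirical average of the $n$ i.i.d.\ nonnegative random variables $\norm{X_k}_2^2$, each with mean $\E\norm{X_k}_2^2 = \sum_j \Sigma_{jj} = \Tr(\Sigma)$. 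It then suffices to show that this average concentrates around $\Tr(\Sigma)$ at relative scale $o_p(1)$.

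For the concentration I would use a variance bound together with Chebyshev's inequality. Expanding $\Var(\norm{X_1}_2^2) = \sum_{j,l}\Cov(X_{1j}^2, X_{1l}^2)$ and using that each coordinate $X_{1j}$ is sub-Gaussian with variance $\Sigma_{jj}$, hence $\E X_{1j}^4 \le C\Sigma_{jj}^2$, the Cauchy--Schwarz inequality gives $|\Cov(X_{1j}^2, X_{1l}^2)| \le \E[X_{1j}^2 X_{1l}^2] + \Sigma_{jj}\Sigma_{ll} \le (C+1)\Sigma_{jj}\Sigma_{ll}$, so that $\Var(\norm{X_1}_2^2) \le (C+1)\Tr(\Sigma)^2$. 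Consequently $\Var(\frac1n\sum_k\norm{X_k}_2^2) \le (C+1)\Tr(\Sigma)^2/n$, and Chebyshev yields, for every $\varepsilon>0$, $\P(|\frac1n\sum_k\norm{X_k}_2^2 - \Tr(\Sigma)| > \varepsilon\Tr(\Sigma)) \le (C+1)/(\varepsilon^2 n) \to 0$, since $\sqrt{\log p/n}\to 0$ forces $n\to\infty$. This is precisely $\frac1n\sum_i d_i^2 = \Tr(\Sigma)(1+o_p(1))$.

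For the second claim, with $p>n$ so that $r=n$, I would exploit only the ordering $d_1\ge\cdots\ge d_n$. Since each of the $\floor{tn}$ largest singular values is at least $d_{\floor{tn}}$, we have $\floor{tn}\, d_{\floor{tn}}^2 \le \sum_{i=1}^{\floor{tn}} d_i^2 \le \sum_{i=1}^n d_i^2$. By the first part and $\Tr(\Sigma)\asymp p$, the right-hand side is $\O_p(n\Tr(\Sigma)) = \O_p(np)$, whence $d_{\floor{tn}}^2 \le \frac{1}{\floor{tn}}\sum_{i=1}^n d_i^2 = \O_p(p/t) = \O_p(p)$ because $t\in(0,1)$ is a fixed constant. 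Taking square roots gives $d_{\floor{tn}} = \O_p(\sqrt p)$.

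The only genuinely delicate point is the variance bound in the first step: in the confounding application $\Sigma$ carries a few eigenvalues of order $p$, so $\norm{\Sigma}_2$ is \emph{not} $\O(1)$ and one cannot afford a bound of the form $\norm{\Sigma}_2\Tr(\Sigma)$. The crude pairwise Cauchy--Schwarz estimate sidesteps this by producing $\Tr(\Sigma)^2$ directly, which matches the worst-case $\norm{\Sigma}_F^2 \le \Tr(\Sigma)^2$ and, after division by $n$, is still $o(1)$ relative to the squared mean $\Tr(\Sigma)^2$. Everything else is routine, and no independence of the coordinates of $X_k$ beyond the row-wise i.i.d.\ structure is required.
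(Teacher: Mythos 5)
Your proof is correct, but for the first claim it takes a genuinely different route from the paper's. The paper writes $\frac1n\sum_i d_i^2 = \Tr(\hat{\Sigma})$ with $\hat{\Sigma}=\frac1n X^TX$ and invokes the entrywise concentration $\norm{\hat{\Sigma}-\Sigma}_\infty = \O_p\big(\sqrt{\log p/n}\big)$ (Bernstein plus a union bound over coordinates, as in its auxiliary lemma for the noise term), which gives $|\Tr(\hat{\Sigma})-\Tr(\Sigma)| = \O_p\big(p\sqrt{\log p/n}\big) = o_p(\Tr(\Sigma))$ under $\Tr(\Sigma)\asymp p$. You instead use the dual trace $\Tr(XX^T)=\sum_{k}\norm{X_k}_2^2$, reducing the statement to the law of large numbers for an average of $n$ i.i.d.\ scalars, which you settle with the crude variance bound $\Var(\norm{X_1}_2^2)\le C\,\Tr(\Sigma)^2$ and Chebyshev. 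Your route is more elementary, avoids the union bound over $p$ coordinates entirely, makes no use of the hypothesis $\sqrt{\log p/n}\to 0$ beyond forcing $n\to\infty$, and in fact yields the slightly stronger relative rate $\O_p(n^{-1/2})$ in place of $\O_p(\sqrt{\log p/n})$; the one ingredient you need that the paper's version does not state is the fourth-moment bound $\E X_{1j}^4\le C\Sigma_{jj}^2$, which requires the sub-Gaussian norm of each coordinate to be comparable to its standard deviation --- but this is exactly the convention the paper itself adopts in its Bernstein step, so it is not an extra assumption in context, and your observation that no independence across coordinates is needed is accurate. For the second claim, your Markov-type argument $\floor{tn}\,d_{\floor{tn}}^2\le\sum_i d_i^2=\O_p(np)$ is precisely the step the paper leaves implicit (its proof stops after the trace identity and never returns to $d_{\floor{tn}}$), so spelling it out is a genuine gain in completeness.
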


For the Trim transform the largest singular value after transformation $\tilde{d}_{(1)}$ equals the trimming threshold $\tau$ and the above lemma shows that $\tau = d_{\floor{tn}}$ for $t \in (0, 1)$, e.g. the median singular value when $t=0.5$, is a good choice and the assumption \textbf{(A2)} holds.

If we further assume that $\Sigma_E$ has bounded singular values, thus ensuring the gap between the $q$-th and $(q+1)$-st eigenvalues of $\Sigma$, we get that all but the first $q$ singular values of $X$ will not be too large, thus justifying the assumption \textbf{(A2)} for the PCA adjustment, since there we have $\lambda_{\max}(\tilde{X}) = \tilde{d}_{(1)} = \tilde{d}_{q+1} = d_{q+1}$.

\begin{restatable}{lemma}{PCAsingularvalue} \label{PCAsingularvalue}
Assume that $p > n$ and that $X$ has i.i.d. sub-Gaussian rows with covariance matrix $\Sigma = \Gamma^T\Gamma + \Sigma_E$, where $\Gamma \in \R^{q\times p}$ and $\lambda_{\max}(\Sigma_E) = \O(1)$, then we have $d_{q+1} = \O_p(\sqrt{p})$.
\end{restatable}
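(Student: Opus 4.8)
The plan is to reduce the bound on the $(q+1)$-st singular value of $X$ to a bound on the largest singular value of an isotropic sub-Gaussian matrix, exploiting the rank-$q$ structure of $\Gamma^T\Gamma$ together with Weyl's inequality.

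First, I would write $X = Z\Sigma^{1/2}$, where $Z \in \R^{n\times p}$ has i.i.d.\ isotropic sub-Gaussian rows (covariance $I_p$) with sub-Gaussian constant $\O(1)$; this is the standard generative representation of sub-Gaussian rows with covariance $\Sigma$. Since $d_{q+1}^2 = \lambda_{q+1}(XX^T)$ and $XX^T = Z\Sigma Z^T$, the decomposition $\Sigma = \Gamma^T\Gamma + \Sigma_E$ gives
$$XX^T = Z\Gamma^T\Gamma Z^T + Z\Sigma_E Z^T,$$
a sum of two positive semidefinite matrices. The first summand $Z\Gamma^T\Gamma Z^T = (Z\Gamma^T)(Z\Gamma^T)^T$ has rank at most $q$, because $\Gamma^T \in \R^{p\times q}$, so its $(q+1)$-st eigenvalue vanishes.

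Next I would apply Weyl's inequality $\lambda_{i+j-1}(A+B) \leq \lambda_i(A) + \lambda_j(B)$ with $i = q+1$ and $j=1$ to obtain
$$d_{q+1}^2 = \lambda_{q+1}(XX^T) \leq \lambda_{q+1}(Z\Gamma^T\Gamma Z^T) + \lambda_{\max}(Z\Sigma_E Z^T) = \lambda_{\max}(Z\Sigma_E Z^T).$$
Since $\Sigma_E \preceq \lambda_{\max}(\Sigma_E) I_p$, conjugating by $Z$ preserves the ordering, so $\lambda_{\max}(Z\Sigma_E Z^T) \leq \lambda_{\max}(\Sigma_E)\,\lambda_{\max}(ZZ^T) = \O(1)\cdot \lambda_{\max}(ZZ^T)$ using the assumption $\lambda_{\max}(\Sigma_E) = \O(1)$.

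It then remains to show $\lambda_{\max}(ZZ^T) = \O_p(p)$, i.e.\ that the largest singular value of $Z$ is $\O_p(\sqrt p)$. This is the one genuinely probabilistic step and is the main (though standard) technical point: for a matrix with $n$ i.i.d.\ isotropic sub-Gaussian rows in $\R^p$ one has $\E[ZZ^T] = p\,I_n$, and a standard non-asymptotic operator-norm bound for matrices with independent sub-Gaussian rows yields $\lambda_{\max}(ZZ^T)^{1/2} \leq \sqrt{n} + CK^2\sqrt{p}$ with high probability; since $p > n$ this is $\O_p(\sqrt p)$. Combining the pieces gives $d_{q+1}^2 \leq \O(1)\cdot\O_p(p) = \O_p(p)$, that is $d_{q+1} = \O_p(\sqrt p)$. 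The main obstacle is ensuring that this concentration has the correct $\sqrt p$ (rather than $\sqrt n$) scaling in the wide regime $p>n$ and that the sub-Gaussian constant of $Z$ is controlled uniformly in $n,p$; both are handled by the row-wise sub-Gaussian matrix bounds, while the Weyl step that annihilates the rank-$q$ confounding component $Z\Gamma^T\Gamma Z^T$ is what makes the argument succeed regardless of how large the top $q$ singular values of $X$ are.
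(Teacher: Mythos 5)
Your proof is correct and takes essentially the same route as the paper: both factor $X = Z\Sigma^{1/2}$, use the rank-$q$ structure of $\Gamma^T\Gamma$ to annihilate the $(q+1)$-st eigenvalue, and reduce everything to the bound $\lambda_{\max}(Z) = \O_p(\sqrt p)$ from Vershynin's operator-norm theorem for matrices with independent sub-Gaussian rows. The only (cosmetic) difference is that you apply additive Weyl to the sample Gram matrix $XX^T = Z\Gamma^T\Gamma Z^T + Z\Sigma_E Z^T$, whereas the paper applies the multiplicative singular-value inequality $\lambda_{q+1}(Z\Sigma^{1/2}) \leq \lambda_{q+1}(\Sigma^{1/2})\,\lambda_{\max}(Z)$ together with Weyl at the population level $\lambda_{q+1}(\Sigma) \leq \lambda_{\max}(\Sigma_E)$.
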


This lemma also shows that in this case the trimming threshold $\tau$ for the Trim transform can be chosen to be $\tau = d_{q+1}$, but $\tau = d_{\floor{tn}}$ might be a better choice as the number of confounders $q$ is unknown.

\subsection*{Assumption \textbf{(A3)}}
This assumption says that the compatibility constant $\phi_{\hat{\Sigma}}$ does not substantially decrease after applying our transformation $F$.
We want to show that by shrinking the singular values we have not shrunk our signal $X\beta$ too much. Intuitively, this means that the active set $X_S$ is not too aligned with the directions along which we substantially shrink, which corresponds to the first several singular vectors in the case of Trim transform and PCA adjustment.

It is difficult to bound $\phi_{\tilde{\Sigma}}$ for an arbitrary spectral transformation $F$, since the distribution of the singular vectors $V$ of the design matrix $X$ is complicated. However, one can directly exploit the results from the factor analysis literature \citep{bai2003inferential} for the PCA adjustment, from which it follows that in a certain asymptotic regime the transformed design matrix $\tilde{X}$ is close to the unconfounded design matrix $E$. Using this result, one can directly obtain the compatibility condition \textbf{(A3)} for the PCA adjustment by using the standard argument \citep{buhlmann2011statistics}.

\begin{restatable}{lemma}{PCAcompatibility} \label{PCAcompatibility}
Let $X$ be generated from the confounding model \eqref{confounding_model} and let $F$ be a spectral transformation shrinking the first $q$ singular values of $X$ to $0$. If $q$ is fixed, $\tfrac{1}{p}\sum_{i,j=1}^p |(\Sigma_E)_{ij}|$ upper bounded and $\tfrac{s p\log p}{n \min(n, p)} \to 0$, we have that, with probability converging to $1$, the compatibility condition holds for the transformed design matrix $\tilde{X} = FX$:
$$\phi_{\tfrac{1}{n}\tilde{X}^T\tilde{X}}^2 \overset{p}{\to} \phi_{\tfrac{1}{n}E^TE}^2 = \Omega_p\left(\lambda_{\min}(\Sigma_E)\right).$$
\end{restatable}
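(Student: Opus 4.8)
\emph{Proof proposal.} The plan is to reduce the claim to two standard ingredients: (i) a compatibility bound for the \emph{unconfounded} design $E$, which has i.i.d.\ sub-Gaussian rows with covariance $\Sigma_E$, and (ii) a perturbation argument showing that replacing $E$ by the transformed matrix $\tilde{X}=FX$ changes the compatibility constant negligibly. The key observation is that, since $F$ annihilates the top $q$ left singular vectors of $X$, we may write $\tilde{X}=(I_n-P_q)X$, where $P_q=\sum_{i=1}^q u_iu_i^T$ is the orthogonal projection onto the span of the first $q$ left singular vectors. Writing $C:=H\Gamma$ for the common (confounding) component and $\hat{C}:=P_qX$ for its rank-$q$ PCA reconstruction, and using $E=X-C$, we obtain the clean identity
$$\tilde{X}-E = C-\hat{C},$$
so that $\tilde{X}-E$ is exactly the error of the PCA estimate of the common component. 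This is precisely the quantity controlled by the factor-analysis literature.

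For ingredient (ii) I would use the standard stability of the compatibility constant under entrywise perturbations of the Gram matrix. Writing $\Delta:=\tilde{\Sigma}-\tfrac1n E^TE$, for any $\alpha$ in the cone $\{\|\alpha\|_1\le 5\|\alpha_S\|_1\}$ one has, by H\"{o}lder,
$$|\alpha^T\Delta\,\alpha| \le \|\Delta\|_\infty\|\alpha\|_1^2 \le 25\,s\,\|\Delta\|_\infty\cdot\tfrac1s\|\alpha_S\|_1^2,$$
where $\|\Delta\|_\infty$ denotes the largest absolute entry. Taking the infimum over the cone in both directions yields $|\phi_{\tilde{\Sigma}}^2-\phi_{\frac1n E^TE}^2|\le 25\,s\,\|\Delta\|_\infty$, so it suffices to prove $s\,\|\Delta\|_\infty\overset{p}{\to}0$. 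Using the identity above and Cauchy--Schwarz on the cross terms, each entry of $\Delta$ is at most $\O_p(1)$ times the normalized column errors $\max_j\tfrac1{\sqrt n}\norm{(C-\hat{C})_j}_2$, the $\O_p(1)$ coming from $\max_j\tfrac1n\norm{e_j}_2^2=\O_p(1)$ and $\max_j\tfrac1n\norm{\tilde{x}_j}_2^2=\O_p(1)$ under $\max_i\Sigma_{ii}=\O(1)$. More sharply, one controls the entries of $\Delta$ directly, exploiting cancellations in the cross terms rather than this crude column bound. Either way, the whole problem reduces to a uniform, high-probability bound on the PCA common-component error $C-\hat{C}$, and the hypothesis $\tfrac{sp\log p}{n\min(n,p)}\to 0$ is tailored so that $s$ times this bound vanishes.

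To obtain that bound I would invoke the consistency theory for principal-component factor estimates of \citet{bai2003inferential}. Here the dense-confounding assumption \textbf{(A1)}, $\lambda_{\min}(\Gamma)=\Omega(\sqrt p)$, plays the role of the usual pervasiveness/spiked condition: it forces the top $q$ population eigenvalues of $\Sigma=\Gamma^T\Gamma+\Sigma_E$ to diverge at rate $p$ while the remaining eigenvalues stay $\O(1)$ (since $\lambda_{\max}(\Sigma_E)$ is bounded), producing an eigen-gap of order $p$. With $q$ fixed, this separation guarantees that the empirical projection $P_q$ consistently estimates the true factor space $\mathrm{span}(H)$, so that $C-\hat{C}$ splits into a factor-space estimation error, controllable by a Davis--Kahan / $\sin\Theta$ argument against the eigen-gap, and an irreducible idiosyncratic-in-factor-directions term $P_HE$; both are of the order dictated by the stated rate. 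Combining with the perturbation inequality of the previous paragraph gives $\phi_{\tilde{\Sigma}}^2\overset{p}{\to}\phi_{\frac1n E^TE}^2$.

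Finally, ingredient (i) is the ordinary compatibility condition for a sub-Gaussian design: since $E$ has i.i.d.\ sub-Gaussian rows with covariance $\Sigma_E$ and $\lambda_{\min}(\Sigma_E)$ bounded away from $0$, the population compatibility constant satisfies $\phi_{\Sigma_E}^2\ge\lambda_{\min}(\Sigma_E)$ (because $\alpha^T\Sigma_E\alpha\ge\lambda_{\min}(\Sigma_E)\norm{\alpha}_2^2\ge\lambda_{\min}(\Sigma_E)\tfrac1s\norm{\alpha_S}_1^2$ on the cone), and the standard concentration argument of \citet{buhlmann2011statistics} transfers this to the empirical Gram matrix $\tfrac1n E^TE$, giving $\phi_{\frac1n E^TE}^2=\Omega_p(\lambda_{\min}(\Sigma_E))$. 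I expect the main obstacle to be ingredient (ii): namely sharpening the factor-analysis consistency results into a \emph{uniform} (max-over-columns or max-over-entries) high-probability bound on $C-\hat{C}$ and verifying that the bookkeeping lands precisely on the condition $\tfrac{sp\log p}{n\min(n,p)}\to 0$; controlling the finite-$q$ projection error via the eigen-gap induced by \textbf{(A1)} is the delicate step, whereas ingredient (i) and the abstract perturbation inequality are routine.
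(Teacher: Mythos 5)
Your proposal follows essentially the same strategy as the paper's proof: the identity $\tilde{X}-E=C-\hat{C}$ (the paper writes $\tilde{C}$ for your $\hat{C}$) is exactly the paper's starting point, the PCA reconstruction error is controlled by appealing to the factor-analysis consistency theory of \citet{bai2003inferential} (the paper directly quotes their Theorem 3, giving $\min(\sqrt{n},\sqrt{p})(\tilde{C}_{ij}-C_{ij})$ asymptotically normal with bounded variance, plus a union bound to get $\norm{\tilde{C}-C}_\infty=\O_p\bigl(\sqrt{\log p/\min(n,p)}\bigr)$, rather than re-deriving it via Davis--Kahan as you propose), and the final step $\phi^2_{\frac{1}{n}E^TE}=\Omega_p(\lambda_{\min}(\Sigma_E))$ is the same standard concentration argument. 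The one place you genuinely diverge is the perturbation step, and it matters for the bookkeeping. The paper does not perturb the Gram matrix: it applies the triangle inequality directly to $\frac{1}{\sqrt{n}}\norm{\tilde{X}\alpha}_2\ge\frac{1}{\sqrt{n}}\norm{E\alpha}_2-\frac{1}{\sqrt{n}}\norm{(\tilde{C}-C)\alpha}_2$ and uses the cone condition $\norm{\alpha}_1\lesssim\norm{\alpha_S}_1\le\sqrt{s}\cdot\frac{1}{\sqrt{s}}\norm{\alpha_S}_1$, so the correction to $\phi$ (hence to $\phi^2$, since $\phi_{\frac{1}{n}E^TE}=\O_p(1)$) costs only a factor $\sqrt{s}$ times $\norm{\tilde{C}-C}_\infty$ (times the paper's extra $\sqrt{p/n}$), which is precisely what the hypothesis $\tfrac{sp\log p}{n\min(n,p)}\to 0$ kills. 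Your entrywise Gram-matrix stability bound $|\phi^2_{\tilde{\Sigma}}-\phi^2_{\frac{1}{n}E^TE}|\le 25\,s\,\norm{\Delta}_\infty$ costs a full factor $s$, and with $\norm{\Delta}_\infty=\O_p\bigl(\sqrt{\log p/\min(n,p)}\bigr)$ your route requires $s^2\log p/\min(n,p)\to 0$. This is not implied by the stated hypothesis in all regimes (e.g.\ $p\asymp n$ with $s$ growing like $\sqrt{n}$ satisfies the lemma's condition but not yours), so as written your reduction does not land on the stated assumption; switching to the paper's direct bound on $\norm{(\tilde{C}-C)\alpha}_2$ over the cone fixes this at no extra cost. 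Everything else in your outline matches the paper's argument.
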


In the Appendix A.1 the analysis of the compatibility constant $\phi_{\tilde{\Sigma}}$ is provided under the somewhat restrictive assumption that the singular vectors $V$ have uniformly distributed direction, but allowing for a more high-dimensional asymptotic regime than in the Lemma \ref{PCAcompatibility}. 

Since the ratio of the transformed singular values for the Trim transform and PCA adjustment is bounded from below by $\tfrac{\tau}{d_{q+1}}$, the compatibility constant $\phi_{\text{Trim}}$ for the Trim transform can be bounded from below by the compatibility constant $\phi_{\text{PCA}}$ for the PCA adjustment:
$$\phi_{\text{Trim}} \geq \tfrac{\tau}{d_{q+1}}\phi_{\text{PCA}} = \tfrac{d_{\floor{tn}}}{d_{q+1}}\phi_{\text{PCA}}$$
and thus the compatibility condition holds for the Trim transform as well if $d_{q+1}$ and $\tau = d_{\floor{tn}}$ are of comparable sizes, i.e. $\tfrac{d_{\floor{tn}}}{d_{q+1}} = \Omega_p(1)$. 
By Lemma \ref{PCAsingularvalue}, we have $d_{q+1} = \O_p(\sqrt{p})$ and by the following lemma it holds that for quite a wide range of settings we also have that $d_{\floor{tn}}^2 = \Omega_p(\lambda_{\min}(\Sigma)p)$. Therefore, Lemma \ref{PCAcompatibility} can be used for showing the compatibility condition for the Trim transform as well.

\begin{restatable}{lemma}{singularvalues} \label{singularvalues}
Assume that $X$ is a random design matrix with i.i.d. rows with covariance matrix $\Sigma$ and suppose $p > n$. Assume that any of the following conditions is satisfied:
\begin{enumerate}[i)]
    \item the rows of $X$ have a sub-Gaussian distribution and $\frac{p}{n} \to \infty$
    \item the rows of $X$ have a $N(0, \Sigma)$ distribution and $\liminf \frac{p}{n} > 1$ 
    \item the rows of $X$ have $N(0, \Sigma)$ distribution and $\limsup \frac{k}{n} < 1$
\end{enumerate}
Then we have
$$d_k^2 = \Omega_p(\lambda_{\min}(\Sigma) p).$$
\end{restatable}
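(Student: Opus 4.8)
The plan is to strip off the covariance $\Sigma$ by a whitening reduction and then invoke non-asymptotic lower bounds on the singular values of a random matrix with independent, isotropic rows. Write $X = Z\Sigma^{1/2}$, so that the rows of $Z$ are i.i.d.\ and isotropic (standard Gaussian in cases (ii)--(iii), isotropic sub-Gaussian in case (i)); we may assume $\Sigma$ is invertible, since otherwise $\lambda_{\min}(\Sigma) = 0$ and the claim is trivial. Because $\Sigma \succeq \lambda_{\min}(\Sigma)\, I_p$, we have $Z\Sigma Z^T \succeq \lambda_{\min}(\Sigma)\, ZZ^T$, and monotonicity of the eigenvalue map gives
$$d_k^2 = \lambda_k(XX^T) = \lambda_k(Z\Sigma Z^T) \geq \lambda_{\min}(\Sigma)\,\lambda_k(ZZ^T) = \lambda_{\min}(\Sigma)\,\lambda_k(Z)^2 .$$
It therefore suffices to show $\lambda_k(Z)^2 = \Omega_p(p)$ for the isotropic design $Z \in \R^{n \times p}$ with $p > n$.

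For cases (i) and (ii) it is enough to control the smallest singular value, since $\lambda_k(Z) \geq \lambda_{\min}(Z)$ for every $k \leq n$. In the Gaussian case (ii), $Z$ has i.i.d.\ $N(0,1)$ entries, and the Davidson--Szarek/Gordon bound together with Gaussian concentration of the $1$-Lipschitz map $Z \mapsto \lambda_{\min}(Z)$ yields $\lambda_{\min}(Z) \geq \sqrt{p} - \sqrt{n} - t$ with probability at least $1 - e^{-t^2/2}$; under $\liminf p/n > 1$ one may take $t$ to be a small constant multiple of $\sqrt{p}$ and obtain $\lambda_{\min}(Z)^2 = \Omega_p(p)$. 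In the sub-Gaussian case (i) the analogous bound $\lambda_{\min}(Z) \geq \sqrt{p} - C\sqrt{n} - t$ holds with $C$ depending on the sub-Gaussian norm: the key observation is that for a fixed unit vector $u$, the combination $Z^T u = \sum_i u_i z_i$ is again isotropic sub-Gaussian, so only independence of the \emph{rows} is used; one controls $\|Z^T u\|_2^2$ by sub-Gaussian concentration and extends to the whole sphere with a standard net argument. The regime $p/n \to \infty$ is precisely what absorbs the unknown constant $C$, giving $\lambda_{\min}(Z)^2 = \Omega_p(p)$.

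Case (iii) is genuinely different, because with only $p > n$ the ratio $p/n$ may tend to $1$, in which case $\lambda_{\min}(Z)$ can vanish and the previous route breaks down. Here I would exploit that $k/n$ is bounded away from $1$ via Cauchy interlacing: let $Z'$ consist of the first $k$ rows of the Gaussian matrix $Z$, so that $Z'Z'^T$ is a principal $k \times k$ submatrix of $ZZ^T$. Interlacing then gives
$$\lambda_k(Z)^2 = \lambda_k(ZZ^T) \geq \lambda_{\min}(Z'Z'^T) = \lambda_{\min}(Z')^2 .$$
The matrix $Z' \in \R^{k \times p}$ is Gaussian with aspect ratio satisfying $\limsup k/p \leq \limsup k/n < 1$ (using $p > n$), so Davidson--Szarek applies to $Z'$ and yields $\lambda_{\min}(Z')^2 \geq (\sqrt{p} - \sqrt{k} - t)^2 = \Omega_p(p)$, completing this case.

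The main obstacle is exactly case (iii): once $p/n \to 1$ the smallest singular value of the full matrix is no longer bounded below, and reading the $k$-th singular value directly off the Marchenko--Pastur law would require controlling the relevant quantile uniformly as $p/n \to 1$. The interlacing trick is the clean device that sidesteps this, trading the $k$-th singular value of the wide $n \times p$ matrix for the \emph{smallest} singular value of a wider $k \times p$ submatrix whose aspect ratio $p/k$ stays bounded away from $1$. A secondary technical point is the sub-Gaussian case (i), where independence holds only across rows; the isotropy-preservation observation above is what makes the concentration-plus-net argument go through, at the cost of requiring $p/n \to \infty$.
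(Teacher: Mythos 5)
Your proposal is correct, and for cases (i) and (ii) it follows essentially the same path as the paper: write $X = Z\Sigma^{1/2}$ and invoke the Davidson--Szarek/Vershynin lower bounds on the smallest singular value of the whitened matrix. Two points differ. First, the paper routes the reduction $d_k \geq \lambda_{\min}(\Sigma)^{1/2}\lambda_k(Z)$ through a separate min--max lemma on singular values of products (Lemma \ref{proof:singularvalues}), whereas you use the Loewner-order inequality $Z\Sigma Z^T \succeq \lambda_{\min}(\Sigma)\,ZZ^T$ together with eigenvalue monotonicity; yours is the more economical derivation of the same fact. Second, and more substantially, in case (iii) the paper reduces to the regime $p/n \to 1$ and reads the $k$-th singular value off a quantile of the Marchenko--Pastur law, arguing that the fraction of eigenvalues of $\tfrac{1}{n}Z^TZ$ exceeding a fixed level $z$ eventually exceeds $k/n$. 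Your Cauchy-interlacing step --- passing to the $k \times p$ Gaussian submatrix $Z'$ of the first $k$ rows so that $\lambda_k(ZZ^T) \geq \lambda_{\min}(Z'Z'^T)$, then applying Davidson--Szarek to $Z'$, whose aspect ratio $k/p \leq k/n$ is bounded away from $1$ --- is a genuinely different and cleaner device: it is non-asymptotic and avoids both the appeal to the limiting spectral distribution and the implicit subsequence bookkeeping in the paper's ``we can assume $p/n \to 1$'' step. The one caveat, which applies equally to the paper's own treatment of case (i), is that the wide-matrix bound $\lambda_{\min}(Z) \geq \sqrt{p} - C\sqrt{n} - t$ for a matrix with independent isotropic sub-Gaussian \emph{rows} (rather than columns) is not literally Vershynin's Theorem 5.39 and requires the concentration-plus-net argument you sketch; you flag this correctly, and in the Gaussian cases (ii) and (iii) nothing further is needed.
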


\subsection*{Performance of various spectral transformations} \label{spectral}
The result of Theorem \ref{optimalrate} can be applied to any spectral transformation that satisfies the assumptions (\textbf{A2}) and (\textbf{A3}). We discuss here which spectral transformations satisfy them and what are their possible advantages and disadvantages for the performance of the corresponding estimator $\hat{\beta}$. The illustration of the spectral transformations discussed below is given in Figure \ref{transformations}.

\begin{figure}[h!]
\centering
\hspace*{-0.5cm}
\includegraphics[scale=0.18]{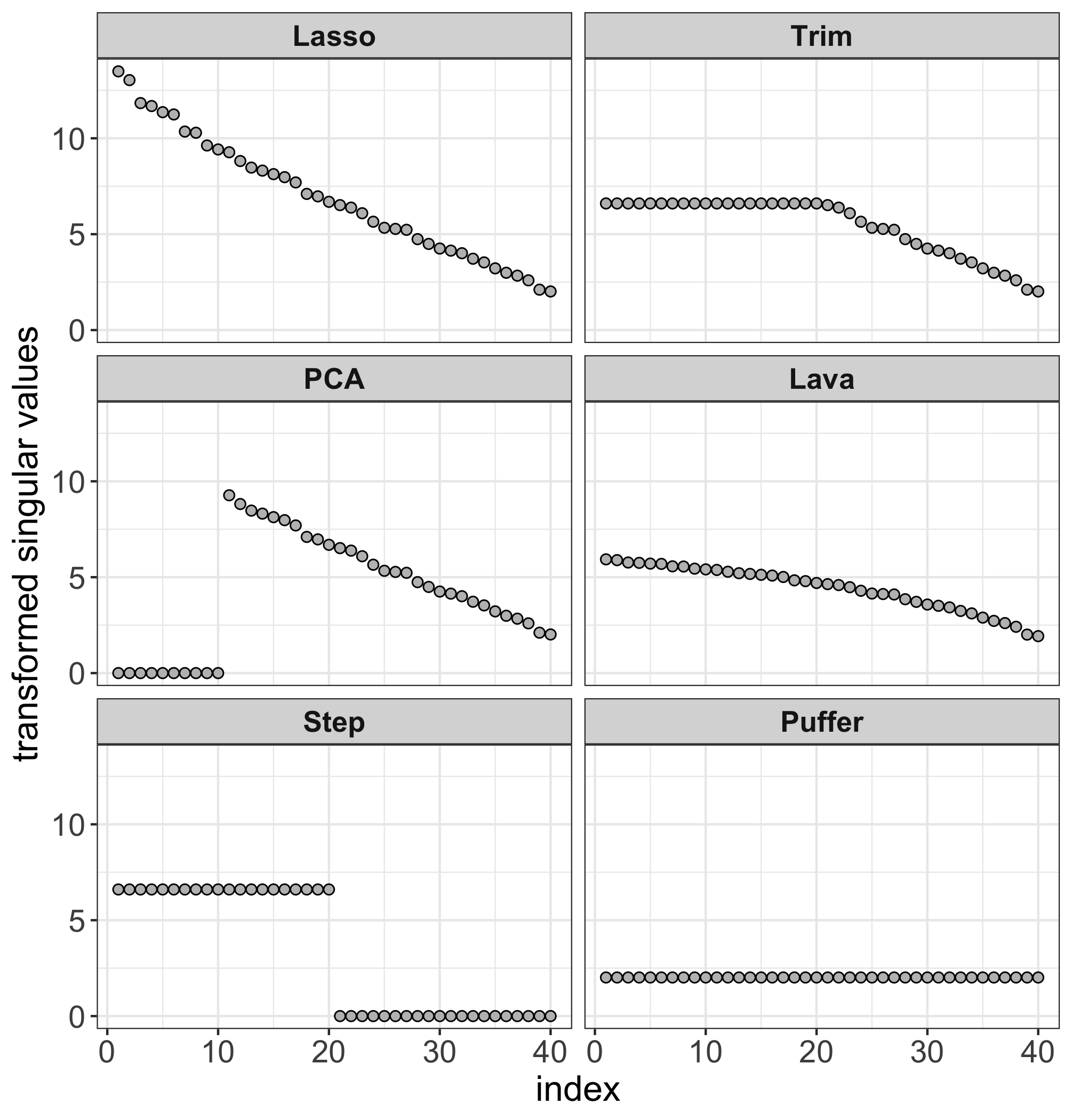}
\caption{Singular values of $\tilde{X}$ after applying spectral transformations corresponding to different methods to $40 \times 60$ matrix $X$ with i.i.d.\ standard normal entries.}
\label{transformations}
\end{figure}

\paragraph{PCA adjustment} As shown above, under certain assumptions we get that the spectral transformation which maps first $q$ singular values to $0$ will satisfy assumptions \textbf{(A2)} and \textbf{(A3)}. Even though it might seem that one disadvantage of this method is that the number of confounding variables $q$ needs to be estimated from the data, one can show that asymptotically it can be done accurately with high probability \citep{bai2003inferential}. PCA adjustment leaves most of the singular values intact, so the increase in the estimator variance will not be large.

\paragraph{Lasso} The simplest option is to take $\tilde{d}_i = d_i$, i.e.\ the usual Lasso algorithm without any transformation. Standard Lasso theory shows that the assumption \textbf{(A3)} is satisfied (see \citet{buhlmann2011statistics}). However, \textbf{(A2)} requires that the largest singular value of $X$ is of order $\O(\sqrt{p})$, which typically does not hold in presence of confounding variables.

\paragraph{Trim transform} As shown above, we have that the Trim transform satisfies assumptions \textbf{(A2)} and \textbf{(A3)} if we take the trimming threshold to be $\tau = d_{\floor{tn}}$ for some $t \in (0,1)$, e.g. the median singular value. Compared to the PCA adjustment, it has an advantage that one does not need to estimate the number of confounding variables from the data. Moreover, it does not shrink first several singular values to $0$, but only to the necessary level. This more gradual shrinkage might lead to better performance especially if the signal $X\beta$ is more aligned with the first few singular vectors.

\paragraph{Lava} The mapping $d_i \to \sqrt{n\lambda_2}d_i/\sqrt{n\lambda_2 + d_i^2}$ used in the Lava algorithm \citep{lava2017} satisfies the conditions \textbf{(A2)} and \textbf{(A3)} as well, since the transformed singular values $\tilde{d}_i$ are quite close to the ones for the Trim transform $\tilde{d}_i = \min(d_i, \tau)$, for an appropriate choice of $\tau$: 
$$\frac{1}{2}\min(d_i, \sqrt{n\lambda_2}) \leq \frac{\sqrt{n\lambda_2}d_i}{\sqrt{n\lambda_2 + d_i^2}} \leq \min(d_i, \sqrt{n\lambda_2}).$$ 
This also reveals how to choose the penalty $\lambda_2$ in Lava: $\lambda_2 = \tfrac{1}{n}d_{\floor{\min(n, p)/2}}^2$ and $\lambda_1$ can be chosen by cross-validation. This transformation has the property that it is smoother than the Trim transform. We note that with this comment and Corollary \ref{optimal_rate_perturbed}, we have established the standard Lasso $\ell_1$-error rate for Lava for estimating the sparse parameter $\beta$ in a high-dimensional regression model; such result is not given in \citet{lava2017}.

\paragraph{Puffer transformation} For the Puffer transform  \citep{jia2015preconditioning}, where we map all singular values to a constant $d_n$ (because of homogeneity it does not matter to which constant we map it, but we have assumed w.l.o.g. that $\tilde{d}_i \leq d_i$, so we need to map them to $d_n$), the assumption \textbf{(A2)} is easily satisfied. However, for \textbf{(A3)} we need to have $d_n^2 = \Omega_p\left(\lambda_{\min}(\Sigma)\,p\right)$. From \citet{vershynin2010introduction}, we have that this holds only if $\liminf \frac{p}{n} > 1$, i.e. the Puffer transform will not work well if $n$ and $p$ are close.

\paragraph{Step function} The justification of the assumptions \textbf{(A2)} and \textbf{(A3)} for Trim transform apply as well for the step function $\tilde{d}_i = \tau \1(d_i > \tau)$ with the same threshold $\tau$. However, unnecessarily shrinking singular values might cause worse performance than for the Trim transform.

\subsection{Low dimensional case: $n > p$} \label{lowdim}
The statement of Theorem \ref{general_bound} still holds in the low-dimensional case $n > p$. However, $\tfrac{1}{n}\norm{\tilde{X}b}_2^2$ will now be of larger order than $\lambda$. We have that $\lambda_{\max}(\tilde{X}) = \O_p(\sqrt{n})$, compared to $\sqrt{p}$ before (see Lemma \ref{averagesingularvalue}), which under the assumption \textbf{(A1')} gives us that $\tfrac{1}{n}\norm{\tilde{X}b}_2^2 = \O(\norm{b}_2^2) = \O(\tfrac{s\sigma^2 \log p}{p})$. Therefore, the second term in the bound of Theorem \ref{general_bound} will be too large in comparison with the first term.

Fortunately, from the remark below Theorem \ref{general_bound}, we see that by taking larger $\lambda$, we can decrease the rate of the second term. If the perturbation term $\tfrac{1}{n}\norm{\tilde{X}b}_2$ gets larger than the standard penalty rate, as it is the case when $n > p$, it is better to penalize more. One gets in this case:
$$\norm{\hat{\beta} - \beta}_1 = \O_p\left(\frac{s\sigma}{\lambda_{\min}(\Sigma)}\sqrt{\frac{\log p}{n}} + \frac{\sqrt{s}\norm{b}_2}{\sqrt{\lambda_{\min}(\Sigma)}}\right)$$
which by Lemma \ref{perturbationsize} in the confounding model, under the dense confounding assumption \textbf{(A1)}, becomes:
$$\norm{\hat{\beta} - \beta}_1 = \O_P\left(\frac{s\sigma}{\lambda_{\min}(\Sigma)}\sqrt{\frac{\log p}{n}} + \frac{\sqrt{s}\sigma}{\sqrt{\lambda_{\min}(\Sigma)}\sqrt{p}}\right).$$

One can not expect the same error rate as in the high-dimensional setting, since this would imply that, for fixed $p$, the error converges to $0$ as $n \to \infty$ which can not happen because the error is not only due to the randomness of the sample data, but also due to the coefficient perturbation $b$. The perturbation $b$ only depends on how the confounding variables affect the predictors and not on the number of data points and thus one can not expect consistency for a fixed $p$. However, we see that the estimator is consistent when $n,p \to \infty$. The more predictors we have, the more is the effect of the confounding variables spread out.

This is also illustrated in Figure \ref{ngrows}, where we can see that even though the error decreases as we increase the number of data points, it still seems to have a nonzero limit. However, the error is small, especially in comparison with the standard Lasso, and there is a benefit in using our method.

\begin{figure}[h]
\centering
\hspace*{-0.5cm}
\includegraphics[scale=0.75]{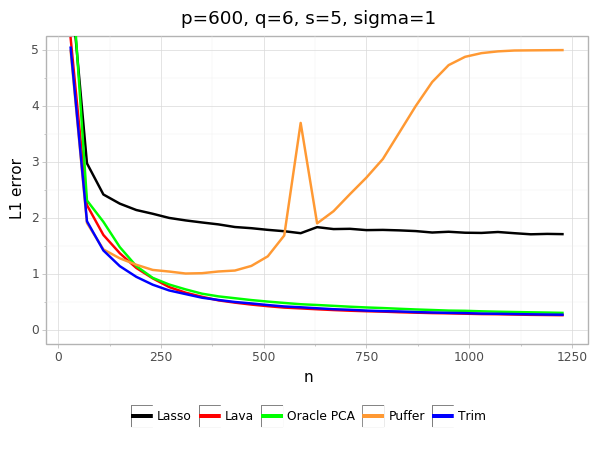}
\caption{Dependence of the estimation error $\|\hat{\beta} - \beta\|_1$ on the sample size $n$ for different spectral transformations and data generated from the confounding model, including the case $p < n$, as described in Section \ref{simsetting}.}
\label{ngrows}
\end{figure}

\section{Empirical Results}
We present here some empirical results for simulated and real data.

\subsection{Simulations}
We demonstrate the performance of various spectral transformations for estimating the coefficient vector $\beta$ with a subsequent use of the Lasso: Trim transform, Lava, Puffer and PCA adjustment. We investigate the cases when the perturbation $b$ arises from hidden confounding and when it is randomly sampled. 

\subsubsection{Setting} \label{simsetting}
We generate the data from the confounding model (\ref{confounding_model}). We take $\Sigma_E = \sigma_E^2I_p$, where $\sigma_E = 2$ and $\beta = (1, 1, 1, 1, 1,0,\ldots,0)$, so $s=5$. For a fixed number $q$ of hidden confounders, we sample the coefficients $\Gamma_{ij}$ and $\delta_i$ independently as standard normal random variables. By default, we take $q = 6$. Unless stated otherwise, we use the noise level $\sigma=1$ as the standard deviation of $\epsilon$. Finally, the sample size is set to be $n=200$ and the dimensionality of the predictors is $p=600$ as the default value. All results are based on $N=2^{12}=4096$ independent simulations.

It is also interesting to consider the perturbed linear model (\ref{perturbed_model}). We do not generate data from this model directly, but we will modify the underlying perturbation term $b$ which is implicit in the confounding model by formula (\ref{perturbation_formula}). This way we can compare the results obtained for the confounding model and the perturbed linear model directly with each other. We replace $b$ by $Qb$ where $Q$ is a random rotation matrix so that the new perturbation has the same size, but with uniformly random direction. We note that the resulting distribution is the same as of the perturbed linear model (\ref{perturbed_model}), where rows of $X$ are drawn from $N(0, \Sigma)$, where $\Sigma = \Gamma^T\Gamma + I_p$, and $b$ is drawn uniformly from a ball of radius $\norm{(\Gamma^T\Gamma + I_p)^{-1}\Gamma^T \delta}_2$.

\subsubsection{Choosing \texorpdfstring{$\lambda$}{lambda}} \label{choosing_lambda}
In practice we encounter the problem of choosing the penalty level $\lambda$ for the Lasso after applying a spectral transformation. The results of Theorem \ref{optimalrate} and Corollary \ref{optimal_rate_perturbed} give us that one can use the standard theoretical penalty rate $\lambda \asymp \sigma \sqrt{\tfrac{\log p}{n}}$ to get the desired error rate of our estimator. In practice one often resorts to using cross-validation (CV) for choosing the penalty parameter rather than using the theoretical value, especially since $\sigma$ is unknown. 

\begin{figure}[h]
\centering
\hspace*{-0.5cm}
\includegraphics[width=1.02\linewidth]{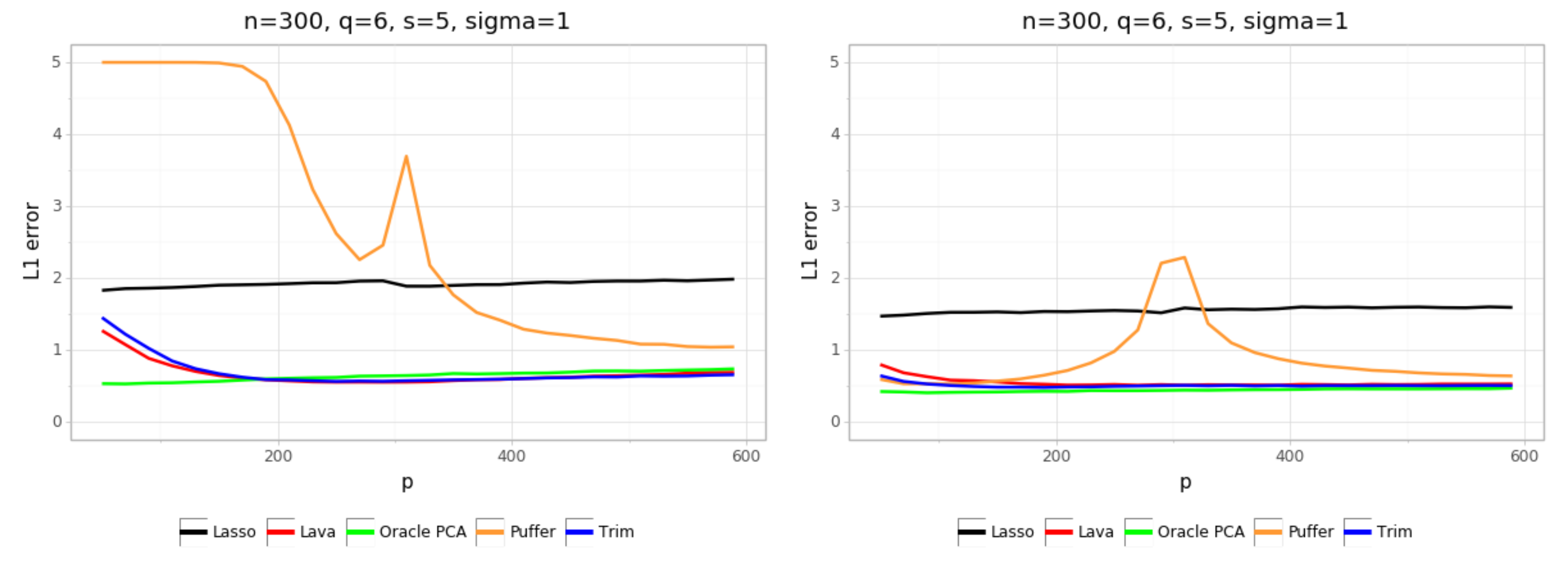}
\caption{Dependence of the estimation error $\|\hat{\beta} - \beta\|_1$ on the number of predictors $p$ for different spectral transformations and data generated from the confounding model \eqref{confounding_model}, as described in Section \ref{simsetting}. In the left plot, the penalty is chosen by cross-validation, whereas in the right plot we use the oracle value for which the estimation error is minimal.}
\label{pgrows}
\end{figure}

However, one needs to be careful in presence of confounding variables; in this case the coefficient vector $\beta + b$ describes the data better than $\beta$, which we are trying to recover. Therefore, cross-validation tends to choose a smaller value of $\lambda$ than the optimal for recovering $\beta$. This is illustrated in the Figure \ref{pgrows}, where we see that, for example, the Puffer transform is significantly affected by this choice of $\lambda$. For recovering $\beta$ in practice, it might be better to increase slightly the value of $\lambda$ chosen by cross-validation \citep{janzing2018detecting}. But on the other hand, smaller $\lambda$ gives us a larger set of variables, which might be beneficial for variable screening.

In all simulations, unless stated otherwise, the penalty level is chosen by cross-validation. This choice does not seem to worsen the performance of the Trim transform or Lava a lot, as one can see in Figure \ref{pgrows} and Figure \ref{misspecification}, and it is of great practical importance since the oracle value of $\lambda$, i.e.\ the one for which $\norm{\hat{\beta}_\lambda - \beta}_1$ is smallest, can not be directly determined from the data.

\subsubsection{Results}
Here we present the results of the simulations for both the confounding model and the perturbed linear model. A fundamental difference between them is that the coefficient perturbation arising from the confounding model is pointing towards the singular vectors of $X$ corresponding to the large singular values (see Figure \ref{projection}). This makes $\norm{Xb}_2$ larger for a fixed $\norm{b}_2$, and in this case the estimation error will be larger. On the other hand, in this case we can improve our accuracy more compared to the plain Lasso by shrinking large singular values, as will be shown below.

\begin{figure}[h]
\centering
\hspace*{-0.3cm}
\includegraphics[width=1.02\linewidth]{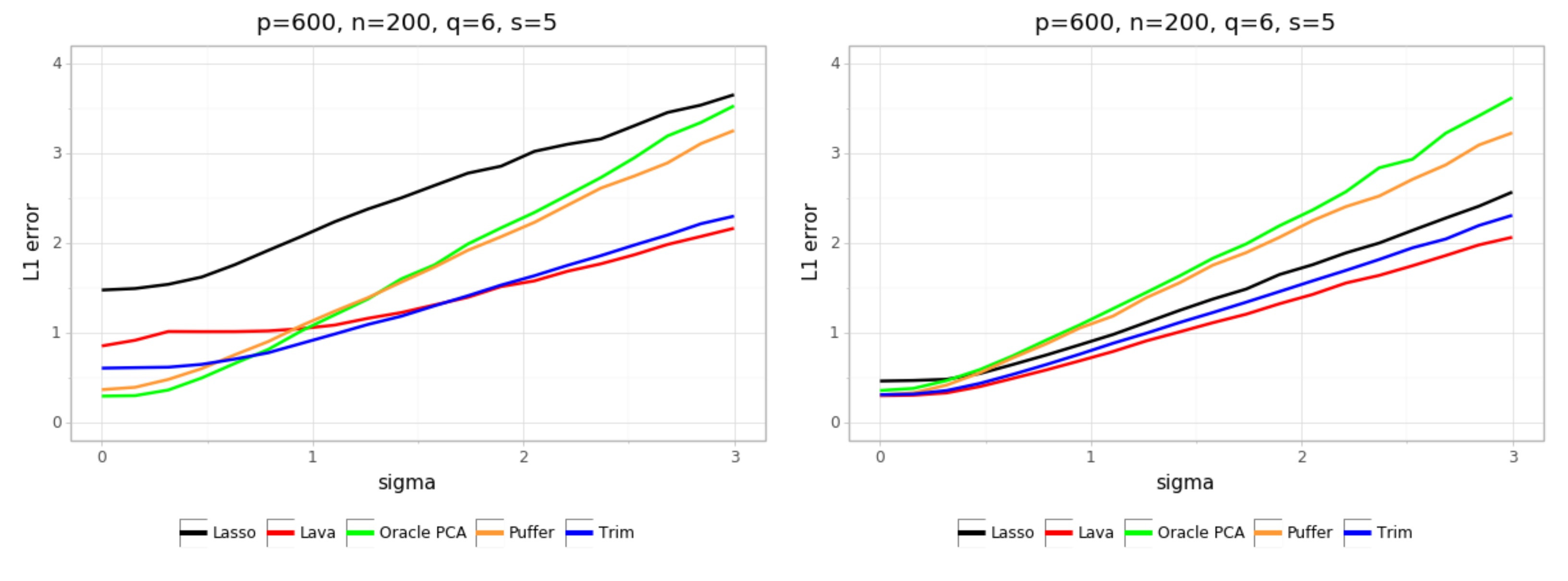}
\caption{Dependence of the estimation error $\|\hat{\beta} - \beta\|_1$ on the size of the noise for different spectral transformations for confounding model (left) and the perturbed linear model (right), as described in Section \ref{simsetting}.}
\label{sigma}
\end{figure}

\paragraph{Noise versus perturbation}
In the left plot in Figure \ref{sigma} we can see how the estimation error changes depending on the size of the noise $\sigma$ in the confounding model. When $\sigma$ is small, the perturbation $b$ has the biggest effect on the error. On the other hand, if $\sigma$ is large, then the influence of the perturbation $b$ becomes less pronounced.

We can see that the standard Lasso is affected a lot by the coefficient perturbation, whereas the Puffer transform and the PCA adjustment are affected more by the additive noise than the Lava and the Trim transform, since the slopes of the corresponding curves are steeper. The higher variance of the Puffer transform is most evident in Figure \ref{ngrows} and Figure \ref{pgrows}; when $n, p$ are close to each other, some of the singular values of $X$ become quite small and thus mapping them to a constant can inflate the error $\epsilon$ in the corresponding directions by a lot. We can observe that the oracle PCA adjustment, which removes exactly the $q$ largest singular values of $X$, works well, especially when $\sigma$ is small. For larger $\sigma$, we see that Trim transform and Lava work slightly better since they do not remove that much of the signal.

In the right plot of Figure \ref{sigma}, we have randomized the direction of $b$ while keeping everything else constant, as described in Section \ref{simsetting}. This then corresponds to a model with random perturbation $b$, but no specific further structure in terms of confounding. We can see a substantial improvement of the standard Lasso: in hindsight this shows that the Lasso is very sensitive to confounding variables but much less so to perturbation of sparsity. Also, it is worth noting that the PCA adjustment method is now consistently worse than the Trim transform or Lava, since the projection of $b$ onto the span of the first $q$ singular vectors is not that large anymore.

\begin{figure}[h]
\centering
\hspace*{-0.3cm}
\vspace*{-0.3cm}
\includegraphics[scale=0.6]{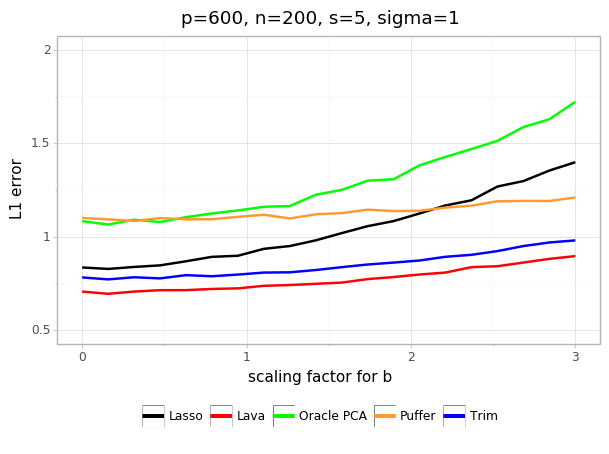}
\caption{Dependence of the estimation error on the size of the perturbation vector $b$ for different spectral transformation for the perturbed linear model, as described in Section \ref{simsetting}.}
\label{inflate_b}
\end{figure}

We can see more clearly the bias-variance tradeoff for different spectral transformations in Figure \ref{inflate_b}, where we have taken the rotated coefficient perturbation $b$, as in the right plot of Figure $\ref{sigma}$ and then artificially scaled it by a chosen constant. For a very small $b$, we see that Puffer and PCA adjustment have somewhat worse performance. As $b$ increases, Trim transform and Lava reduce the bias caused by $b$ much better than the Lasso. We can also see that the PCA adjustment does not reduce the bias as much, but its performance would be significantly better if $b$ was not rotated, but aligned with the top several principal components as in the confounding model, see Figure \ref{sigma}.

\paragraph{Number of confounding variables}
In Figure \ref{numconfounders} we can see how the estimation error depends on the number $q$ of confounding variables. As above, we see that the Lasso is severely affected by the presence of confounding variables. The Puffer transform performs reasonably well since $n$ and $p$ are different enough and the Trim transform and Lava exhibit similar and good performance in all cases.

PCA adjustment works well for the confounding model if we correctly guess the number of confounding variables. In the left plot in Figure \ref{numconfounders} we can clearly see how the estimation error is affected by the misspecification of the number of the principal components we remove. The oracle PCA method, which removes exactly $q$ principal components, performs reasonably well, particularly for smaller values of $q$. However, if we overestimate or especially if we underestimate the number of confounding variables, the estimation error will become significantly worse compared to the Trim transform or Lava.

\begin{figure}[h]
\centering
\hspace*{-0.3cm}
\includegraphics[width=1.02\linewidth]{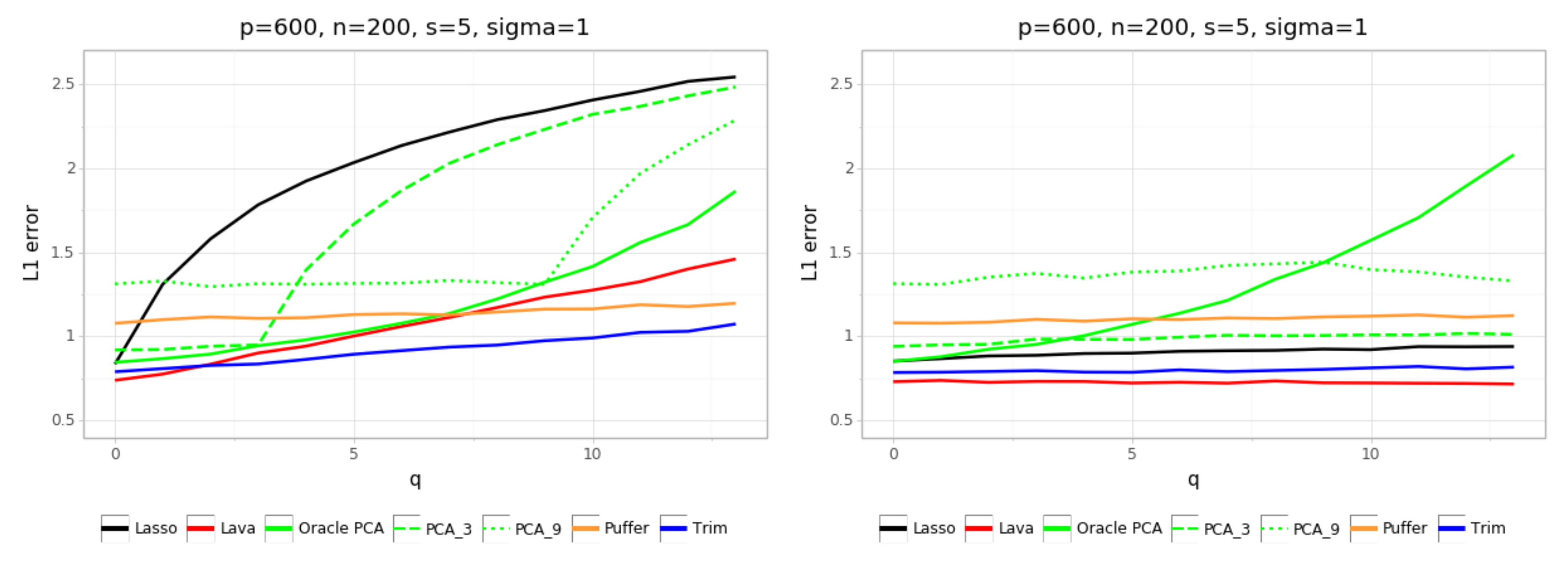}
\caption{Dependence of the estimation error $\|\hat{\beta} - \beta\|_1$ on the number of confounding variables for different spectral transformation for confounding model (left) and the perturbed linear model (right) as described in Section \ref{simsetting}.}
\label{numconfounders}
\end{figure}

\paragraph{Method robustness}
We are interested in whether there are any disadvantages in using the spectral transformations if we wrongly think that there is some hidden confounding or that the sparse coefficient has been perturbed.

In Figure \ref{misspecification} we display the estimation error for the confounding model as in Figure \ref{numconfounders}, but where the coefficient bias $b$ has been set to $0$, i.e.\ this is a standard sparse linear model with $X$ being generated from the spiked covariance model.

There is no indication for relevant differences between the performances of the Trim transform, Lava and the Lasso. The Lasso performs slightly better for larger values of $q$ and slightly worse for smaller $q$. It is worth noting that on this plot the estimation error starts to decrease as $q$ increases, which is due to a scaling issue. This happens because the variance of $X$ increases as $q$ increases, since $\Sigma = \Gamma^T \Gamma + \Sigma_E$, thus effectively increasing the signal to noise ratio. PCA adjustment seems to be affected most by the choice of $\lambda$, especially for larger $q$ since its shrinkage is larger in this case, see Figure \ref{misspecification}. With the oracle choice of the penalty level, its performance is very similar to the performance of the Lasso.

Our empirical results support theoretical evidence, which showed that it is safe to use wisely chosen spectral transformations such as the Trim transform or the Lava. If there are any confounding variables present, there is a large improvement over the standard Lasso. On the other hand, if there are no confounding variables, the Trim transform or Lava will have about the same performance as the Lasso. Therefore, our method can be thought of as an easy to use modification of the Lasso which is robust to hidden confounding.

\begin{figure}[h]
\centering
\hspace*{-0.5cm}
\includegraphics[width=1.02\linewidth]{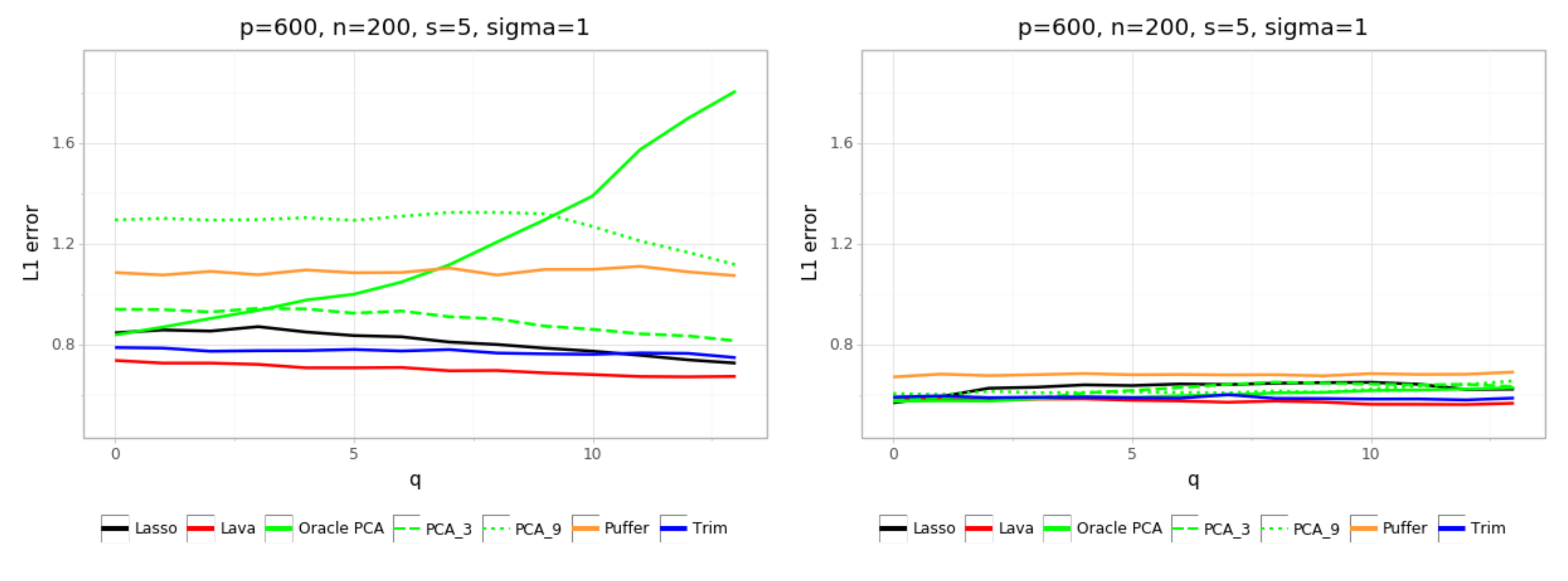}
\caption{Size of the estimation error $\|\hat{\beta} - \beta\|_1$ for a sparse linear model where $\Sigma = \Gamma^T\Gamma + I_p$, i.e. the confounding model with the induced perturbation $b$ set to $b = 0$. The penalty level $\lambda$ is either chosen by cross-validation (left) or taken to be the oracle value, which minimizes the $\ell_1$-error (right).}
\label{misspecification}
\end{figure}

\subsection{Application to genomic dataset}
In this section we demonstrate the robustness of our method against hidden confounders on a real genomic dataset where we have certain knowledge about the confounding variables. We inspect various spectral transformations in combination with the Lasso and evaluate the differences between the estimates for the original data set and the one where the confounding variables have been adjusted for.

\subsubsection{Gene expression dataset}
We have obtained data from the GTEx Portal (\url{http://gtexportal.org}). The GTEx project provides large-scale data with an aim to help the scientific community to study gene expression, gene regulation and their relationship to genetic variation. It provides gene expression data from 11,688 samples collected postmortem from 53 different tissues of 714 human donors. 

Gene expression is a process in the cell in which the information stored in a certain gene is used for the synthesis of gene products such as proteins. In the GTEx Project it was quantified by the amount of the mRNA in the cell which was created from this gene. Gene expression differs among different people and among different cells within the human body. The type of the cells is determined by the gene expression within them; even though the DNA in all cell nuclei is the same, cells in different tissues behave and look differently and perform significantly different tasks. Gene expression is also affected by the genetic variation and determining the expression quantitative trait loci (eQTL), which are parts of genome which explain the variation in the gene expression, is a very important problem which will help to understand the relationship between genetic variation and different phenotypes.

\subsubsection{Setting}
We use the fully processed, filtered and normalized gene expression matrix for the skeletal muscle tissue. We consider the gene expression of $p = 14'713$ protein-coding genes measured from $n = 491$ samples. For our purpose, an important aspect of this dataset is that there are also $q = 65$ different covariates provided, which are proxies for the hidden confounding variables. They include genotyping principal components and PEER factors. We can thus obtain the deconfounded data by regressing out these given covariates.

The left panel of Figure \ref{expression-svd} displays the singular values of the initial data matrix. We see that the first several singular values are substantially larger than the rest which suggests a possible existence of hidden confounders. In the right part of Figure \ref{expression-svd} we can see the singular values of the deconfounded data matrix where we have regressed out all of the $q = 65$ covariates which are provided as confounding proxies. 

\begin{figure}[h]
\centering
\hspace*{-0.4cm}
\includegraphics[scale=0.13]{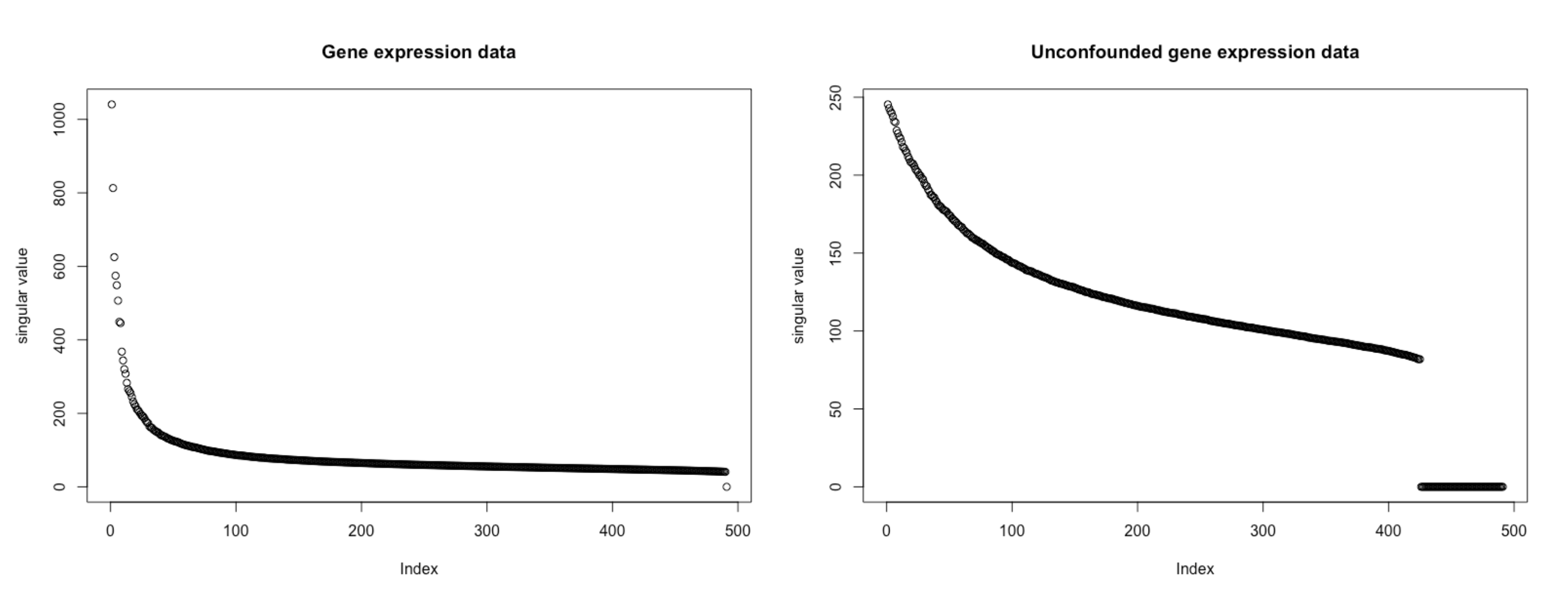}
\caption{Singular values of the gene expression data matrix for skeletal muscle tissue before (left) and after (right) regressing out the provided $q = 65$ confounding covariates.}
\label{expression-svd}
\end{figure}

We are going to explore now the robustness of the Lasso, Trim transform, and Lava against hidden confounders by comparing the estimates based on the original and the deconfounded data. 
For a fixed value of $k$, we regress out first $k$ given confounder proxies from the original gene expression data matrix $X$ in order to get the matrix $X^{(k)}$ and we randomly choose one column to represent the response $Y$. We are thus trying to explain the expression of one gene by the expressions of other genes.

For every $s = 1, \ldots, 20$, we apply the given method on $X$ and $X^{(k)}$ with the regularization $\lambda$ chosen as the largest value such that the support size of $\hat{\beta}$ equals a prespecified value $s$. This leads to estimates $\hat{\beta}_s$ and $\hat{\beta}_s^{(k)}$. We measure the dissimilarity of the corresponding supports by $J(\text{supp}\, \hat{\beta}_s,\, \text{supp}\, \hat{\beta}_s^{(k)})$, where $J$ is the Jaccard distance: 
$$J(A, B) = \frac{A \triangle B}{A \cup B}.$$

\begin{figure}[h]
\centering
\hspace*{-0.4cm}
\includegraphics[scale=0.118]{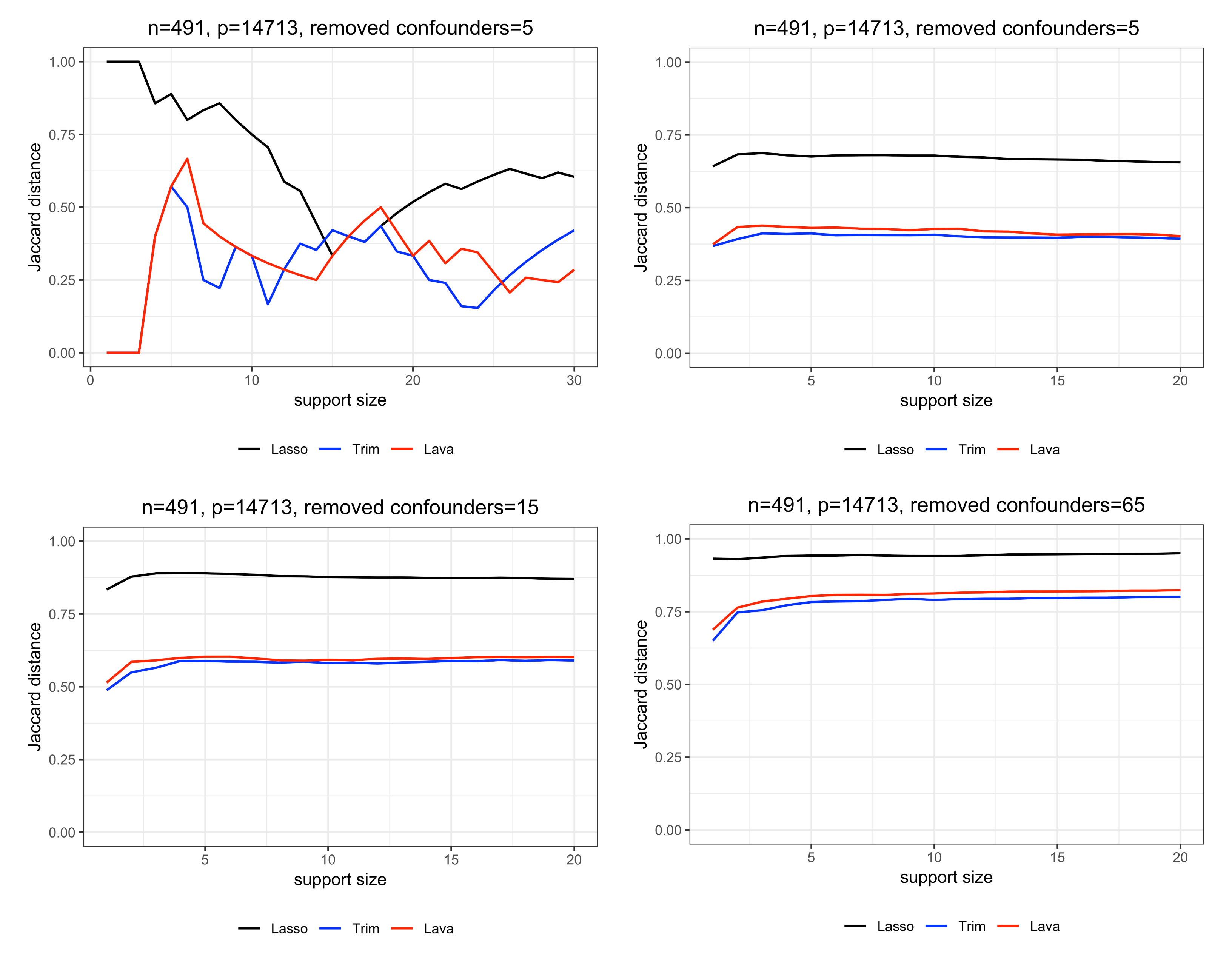}
\caption{Jaccard distance of the supports of the estimates based on the original and deconfounded data for one randomly chosen response (top left). Jaccard distance, averaged over $500$ randomly chosen responses, of the supports of estimates based on the original data and data with $5$ (top right), $15$ (bottom left) and $65$ (bottom right) confounder proxies removed.}
\label{jaccard}
\end{figure}

\subsubsection{Results}
In the top left image in Figure \ref{jaccard}, we can see the difference of the estimates for the original and the deconfounded data, where $5$ randomly chosen confounding variables have been removed and the response $Y$ is the expression of a randomly chosen gene. We can see that the Jaccard distance for the Lasso is closer to $1$, indicating that the estimated support sets are very different and almost disjoint; The Trim transform and Lava are much more robust to the hidden confounders and we see that the Jaccard distance between the estimates based on confounded and deconfounded data is much smaller.

In order to make sure that the choice of response $Y$ did not affect the results, we have repeated this experiment for $500$ randomly chosen genes and averaged the obtained results. The results are also displayed in Figure \ref{jaccard}. We can see that, as we increase the number $k$ of confounding variables which we regress out, the Jaccard distance for all methods is increasing. This is to be expected since $X^{(k)}$ and $X$ are becoming more different as we increase $k$. However, we can infer that the Trim transform and Lava are consistently better than the Lasso, exhibiting also in this real dataset the robustness against confounding variables.

\section{Discussion}
We propose to add robustness against hidden confounding variables by employing a wisely chosen spectral transformation before using the Lasso or other high-dimensional sparse regression techniques. There is essentially nothing to lose but much to be gained which is in line with the typical argument of robustness \citep{huber2011robust} We can also take directly the viewpoint of deconfounding before performing further analysis: this is the more common thinking in many applications where hidden confounding is expected to happen, a prime example being genetics \citep{novembre2008interpreting}.

The confounding issue in the context of linear models can be represented and analyzed as a regression problem with coefficient $\beta + b$; the coefficient $\beta$ is the true underlying parameter in absence of confounding variables, while the perturbation $b$ is due to the confounding. We develop theory for a linear model with regression parameters $\beta + b$ where $\beta$ is sparse and the perturbation $b$ sufficiently small, a condition satisfied when the confounding is sufficiently 'dense' in the sense that each confounding variable affects many predictors. We show that certain spectral transformations, such as the Trim transform or the PCA adjustment, in conjunction with using the Lasso afterwards, achieve the same $\ell_1$-convergence rate of the $\|\hat{\beta} - \beta\|_1$ as the Lasso for the linear model without confounding; see Section \ref{theory} and Theorem \ref{optimalrate}. Such a theoretical result is entirely new and covers also the Lava method \citep{lava2017}. As a consequence, the theoretical result also establishes spectral deconfounding as an excellent method for removing the effect of dense hidden confounders in high-dimensional settings.

Another advantage of our approach is its simplicity: it consists of just one simple pre-transformation step before using the Lasso. It requires the computation of the SVD of the design matrix which has computational complexity of $\O(\min(n^2p,\, np^2))$ and can be done in a few lines of code.

The topic of deconfounding has not received too much attention, despite its practical importance \citep{greenland1999confounding, brookhart2010confounding}. Here we have shown that it is possible and easy to protect against hidden dense confounding in the case of linear regression. Similar ideas might be powerful as well for more complicated models.

\section*{Acknowledgements}
We would like to thank Gian Thanei and Benjamin Frot for providing processed data for our biological applications.

\appendix
\section{}

\subsection{Compatibility condition after transformation}
We discuss here in more detail the assumption \textbf{(A3)}, which is the compatibility condition \citep{buhlmann2011statistics} for the transformed design matrix $\tilde{X}$.

This assumption means that the compatibility constant $\phi_{\hat{\Sigma}}$ does not substantially decrease after applying our transformation $F$. Intuitively, we want to show that by transforming the singular values, we have not shrunk our signal $X\beta$ too much. For Trim transform and PCA adjustment, this means that the active set $X_S$ is not aligned too much with the direction of the singular vectors corresponding to the large singular values, since those are the directions along which we are substantially shrinking. 

In order to proceed, we need to understand how exactly $\phi_{\tilde{\Sigma}}$ depends on the transformed singular values in $\tilde{D}$. The answer to this question depends delicately on the singular vectors $V$ of $X$ as well. The following bound helps us to understand the behaviour of the compatibility constant depending on the transformed singular values.
\begin{restatable}{lemma}{compatibilityineq} \label{compatibility-ineq}
Consider an arbitrary spectral transformation $F$ as in (\ref{spectraltransformation}). Let $1 \leq k < r=\min(n,p)$ be an arbitrary integer. Then:
$$\phi_{\tilde{\Sigma}}^2 \geq \sum_{i=1}^r \frac{1}{n}\tilde{d}_{(i)}^2(\phi^2_{M_i} - \phi^2_{M_{i-1}}) \geq \frac{1}{n}\tilde{d}_{(k)}^2\phi^2_{M_k},$$
where $M_k = \left[V_{(1)}, \ldots, V_{(k)}\right]\left[V_{(1)}, \ldots, V_{(k)}\right]^T$.
\end{restatable}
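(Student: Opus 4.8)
The plan is to expand the quadratic form $\alpha^T\tilde{\Sigma}\alpha$ in the orthonormal basis of singular vectors, reorganize the resulting sum by Abel summation so that all coefficients are manifestly nonnegative, and only then pass the compatibility infimum through the sum term by term. Using $\tilde{X} = U\tilde{D}V^T$ we have $\tilde{\Sigma} = \tfrac{1}{n}V\tilde{D}^2V^T$, so for any $\alpha$
\[
\alpha^T\tilde{\Sigma}\alpha = \frac{1}{n}\sum_{i=1}^r \tilde{d}_i^2 (V_i^T\alpha)^2 = \frac{1}{n}\sum_{i=1}^r \tilde{d}_{(i)}^2 (V_{(i)}^T\alpha)^2,
\]
where the second equality just reindexes the (commutative) sum so the transformed singular values appear in decreasing order $\tilde{d}_{(1)} \geq \cdots \geq \tilde{d}_{(r)}$ with matched orthonormal vectors $V_{(i)}$. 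Since $M_i - M_{i-1} = V_{(i)}V_{(i)}^T$ with $M_0 := 0$, we get $(V_{(i)}^T\alpha)^2 = \alpha^T(M_i - M_{i-1})\alpha$; substituting the telescoping representation $\tilde{d}_{(i)}^2 = \sum_{j=i}^r(\tilde{d}_{(j)}^2 - \tilde{d}_{(j+1)}^2)$ (with $\tilde{d}_{(r+1)} := 0$) and interchanging the two sums yields
\[
\alpha^T\tilde{\Sigma}\alpha = \frac{1}{n}\sum_{j=1}^r(\tilde{d}_{(j)}^2 - \tilde{d}_{(j+1)}^2)\,\alpha^T M_j\alpha .
\]
The crucial gain is that every coefficient $\tilde{d}_{(j)}^2 - \tilde{d}_{(j+1)}^2$ is now nonnegative.

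Next, for any $\alpha$ in the cone $\mathcal{C} = \{\|\alpha\|_1 \leq 5\|\alpha_S\|_1\}$, the definition of $\phi_{M_j}$ gives the pointwise bound $\alpha^T M_j\alpha \geq \phi_{M_j}^2\,\tfrac{1}{s}\|\alpha_S\|_1^2$. Because the same cone enters every compatibility constant, I can insert these bounds term by term into the nonnegatively weighted sum, divide by $\tfrac{1}{s}\|\alpha_S\|_1^2$, and take the infimum over $\alpha \in \mathcal{C}$ only at the very end, obtaining $\phi_{\tilde{\Sigma}}^2 \geq \tfrac{1}{n}\sum_{j=1}^r(\tilde{d}_{(j)}^2 - \tilde{d}_{(j+1)}^2)\phi_{M_j}^2$. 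A reverse Abel summation, using $\phi_{M_0}^2 = 0$, rewrites the right-hand side as $\tfrac{1}{n}\sum_{i=1}^r\tilde{d}_{(i)}^2(\phi_{M_i}^2 - \phi_{M_{i-1}}^2)$, which is the first claimed inequality.

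For the second inequality I would use two monotonicities: $\phi_{M_i}^2$ is non-decreasing in $i$, since $M_i \succeq M_{i-1}$ forces $\alpha^T M_i\alpha \geq \alpha^T M_{i-1}\alpha$ and hence makes the corresponding infima increase, while $\tilde{d}_{(i)}^2$ is non-increasing. Thus in $\sum_i \tilde{d}_{(i)}^2(\phi_{M_i}^2 - \phi_{M_{i-1}}^2)$ every increment $\phi_{M_i}^2 - \phi_{M_{i-1}}^2$ is nonnegative, so I may discard all terms with $i > k$, bound $\tilde{d}_{(i)}^2 \geq \tilde{d}_{(k)}^2$ for $i \leq k$, and telescope the remainder to get $\tfrac{1}{n}\tilde{d}_{(k)}^2(\phi_{M_k}^2 - \phi_{M_0}^2) = \tfrac{1}{n}\tilde{d}_{(k)}^2\phi_{M_k}^2$.

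I expect the only genuinely delicate point to be the interchange in the second step: that the infimum of the weighted quadratic form is bounded below by the corresponding weighting of the individual infima $\phi_{M_j}^2$. This is legitimate precisely because the coefficients $\tilde{d}_{(j)}^2 - \tilde{d}_{(j+1)}^2$ are nonnegative (secured by the reindexing into decreasing order) and because the feasible set $\mathcal{C}$ is identical for every $\phi_{M_j}$, so the lower bounds may be applied for each fixed $\alpha$ before a single infimum is taken. Everything else is routine bookkeeping with summation by parts and the two monotonicities.
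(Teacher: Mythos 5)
Your proposal is correct and follows essentially the same route as the paper's proof: expand $\alpha^T\tilde{\Sigma}\alpha$ in the singular basis, Abel-summate into nonnegatively weighted partial sums $\alpha^T M_j\alpha$, pass the infimum through term by term (the "infimum of a sum is at least the sum of the infima" step), and then use the monotonicity of $\tilde{d}_{(i)}^2$ and of $\phi_{M_i}^2$ to truncate and telescope. Your write-up is in fact slightly more careful than the paper's on two points it leaves implicit — the reindexing that matches $V_{(i)}$ to the sorted $\tilde{d}_{(i)}$, and the justification that $\phi_{M_i}^2 - \phi_{M_{i-1}}^2 \geq 0$ via $M_i \succeq M_{i-1}$ — and it correctly carries the $1/n$ from the start.
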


We can use Lemma \ref{compatibility-ineq} for the Trim transform with $k = \floor{tn}$ or the PCA adjustment where we map $\hat{q} = o(k)$ singular values to $0$. Then it suffices to show that $\tilde{d}_{(k)}^2$ and $\phi^2_{M_k}$ are sufficiently large in order to show \textbf{(A3)}. This means that certain proportion of the singular values is still large (of order $\sqrt{p}$) after shrinking, and that the direction of the corresponding singular vectors is not too unfavourable.

The Lemma \ref{singularvalues} from the main part of the paper shows that for quite a wide range of settings we have that $d_{\floor{tn}}^2 = \Omega_p(\lambda_{\min}(\Sigma)p)$, which means that after applying the Trim transform with $\tau = d_{\floor{tn}}$, or the PCA adjustment which shrinks $o(n)$ singular values to zero, we still have $\tilde{d}_{(\floor{tn})}^2 = \Omega_p(\lambda_{\min}(\Sigma) p)$. It therefore suffices to show that $\phi_{M_{\floor{tn}}}^2 = \Omega_p\left(\frac{n}{p}\right)$ in order to show \textbf{(A3)} from Lemma \ref{compatibility-ineq}. This is not simple to show, as distribution of $V$ is quite complicated for general $\Sigma = \Cov(X)$, but this always holds under the uniformity condition described in the next lemma.

\begin{restatable}{lemma}{uniformV} \label{M-compatibility}
If $p > n$ and $V$ has a uniform distribution on the Stiefel manifold, then for any $k = \Omega(n)$, we have
$$\phi_{M_k}^2 = \Omega_p\left(\frac{n}{p}\right).$$
\end{restatable}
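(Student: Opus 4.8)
The plan is to reduce the compatibility constant of the random projection $M_k$ to a standard compatibility statement for a Gaussian sample covariance matrix. First I would exploit rotational invariance: since $V$ is uniform on the Stiefel manifold, the span of $V_{(1)},\ldots,V_{(k)}$ is a uniformly distributed $k$-dimensional subspace of $\R^p$, so $M_k$ has the same distribution as the orthogonal projection $P_G = G(G^TG)^{-1}G^T$ onto the column span of a matrix $G \in \R^{p\times k}$ with i.i.d.\ $N(0,1)$ entries (the column span of a Gaussian matrix is likewise Haar on the Grassmannian). This replaces the awkward law of $V$ by an explicit Gaussian object.

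Next I would sandwich the quadratic form. Since $M_k \overset{d}{=} P_G$,
$$\alpha^T M_k \alpha \;=\; (G^T\alpha)^T (G^TG)^{-1}(G^T\alpha) \;\geq\; \frac{\|G^T\alpha\|_2^2}{\lambda_{\max}(G^TG)}.$$
Because $p > n \geq k$, the matrix $G$ is tall and standard bounds on the largest singular value of a Gaussian matrix (e.g.\ $\lambda_{\max}(G) \leq \sqrt{p}+\sqrt{k}+t$) give $\lambda_{\max}(G^TG) = \O_p(p)$. Hence, up to this factor $\O_p(p)$, controlling $\phi_{M_k}^2$ reduces to lower bounding $\inf_{\|\alpha\|_1\le 5\|\alpha_S\|_1} s\|G^T\alpha\|_2^2/\|\alpha_S\|_1^2$ over the compatibility cone.

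The key observation is that $\tfrac1k\|G^T\alpha\|_2^2 = \alpha^T\hat\Sigma_G\alpha$, where $\hat\Sigma_G = \tfrac1k GG^T$ is exactly the sample covariance matrix of the $k$ i.i.d.\ columns of $G$, each $N(0,I_p)$; its population covariance is the identity, whose compatibility constant equals $1$. Invoking the standard result that the empirical compatibility constant of a sub-Gaussian random design stays bounded away from zero once the sample size exceeds a constant times $s\log p$ \citep{buhlmann2011statistics}, and using $k = \Omega(n)$ together with the standing sparsity regime $s\log p = o(n)$, I obtain $\phi_{\hat\Sigma_G}^2 = \Omega_p(1)$, so that $\inf_{\mathcal C} s\|G^T\alpha\|_2^2/\|\alpha_S\|_1^2 = k\,\phi_{\hat\Sigma_G}^2 = \Omega_p(n)$. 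Combining this with the operator-norm bound yields
$$\phi_{M_k}^2 \;\geq\; \frac{1}{\lambda_{\max}(G^TG)}\inf_{\mathcal C}\frac{s\|G^T\alpha\|_2^2}{\|\alpha_S\|_1^2} \;=\; \frac{\Omega_p(n)}{\O_p(p)} \;=\; \Omega_p\!\left(\frac{n}{p}\right).$$

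I expect the main obstacle to be the uniform (over the cone) lower bound on $\|G^T\alpha\|_2^2$: for a single fixed $\alpha$ this is merely $\chi^2$ concentration, but the infimum over the non-compact cone requires the full random-design compatibility machinery, and it is precisely there that the sparsity condition $s\log p = o(n)$ enters. The distributional reduction of the first step and the singular-value bound are routine by comparison; the only care needed there is to verify that the Haar-distributed $k$-subspace and the Gaussian column span induce the same law for $M_k$.
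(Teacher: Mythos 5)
Your proposal is correct and is essentially the paper's own argument: both identify $M_k$ in distribution with the projection onto the column span of a $p\times k$ Gaussian matrix, lower-bound $\alpha^T M_k\alpha$ by $\|G^T\alpha\|_2^2/\lambda_{\max}(G^TG)$, and combine the standard random-design compatibility bound $\phi^2_{\frac{1}{k}GG^T}=\Omega_p(1)$ with the Gaussian singular-value bound $\lambda_{\max}(G^TG)=\O_p(p)$. The only (welcome) difference is that you make explicit the sparsity regime $s\log p = o(n)$ needed for the Gaussian compatibility step, which the paper leaves implicit.
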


This uniformity assumption is sensible to make since it will be true under any of the two following scenarios: the first is that the components of $X$ are i.i.d.\ normal random variables; the second is that the singular vectors of $\Sigma$ have the uniform distribution on the space of orthogonal matrices themselves. This, for example, might happen in the confounding model (\ref{confounding_model}), when $\Cov(E) = \sigma_E^2I_p$ and the rows of $\Gamma$ have rotationally invariant distribution, i.e. $\Gamma Q$ has the same distribution as $\Gamma$ for any orthogonal matrix $Q \in \R^{p \times p}$. 

The uniformity assumption is sufficient, but not necessary for the assumption (\textbf{A3}) to hold. In the main part it was shown to hold in a certain asymptotic regime for the Trim transform and PCA adjustment. We believe however that the compatibility condition holdS in a very broad range of asymptotic regimes.

\subsection{Proofs}

Here we provide the proofs of all results stated in this paper.

\generalbound*
\begin{proof} Denote by $\beta^0$ the true coefficient vector. 

Since $\hat{\beta}$ minimizes $\frac{1}{n}\|\tilde{Y} - \tilde{X}\beta\|^2_2 + \lambda\|\beta\|_1$, we have:
\begin{align*}
\frac{1}{n} \|\tilde{Y} - \tilde{X}\hat{\beta}\|_2^2 + \lambda\|\hat{\beta}\|_1 &\leq \frac{1}{n} \|\tilde{Y} - \tilde{X}\beta^0\|_2^2 + \lambda\|\beta^0\|_1 \\ 
\frac{1}{n} \|\tilde{X}(\hat{\beta} - \beta^0)\|_2^2 + \lambda\|\hat{\beta}\|_1 &\leq \frac{2}{n} (\tilde{Y} - \tilde{X}\beta^0)^T\tilde{X}(\hat{\beta}-\beta^0) + \lambda\|\beta^0\|_1 \\
&\leq \frac{2}{n} \tilde{\epsilon}^T\tilde{X}(\hat{\beta}-\beta^0) + \frac{2}{n} b^T\tilde{X}^T\tilde{X}(\hat{\beta}-\beta^0) + \lambda\|\beta^0\|_1 \\
\frac{1}{n} \|\tilde{X}(\hat{\beta} - \beta^0 - b)\|_2^2 + \lambda\|\hat{\beta}\|_1 &\leq \frac{2}{n} \tilde{\epsilon}^T\tilde{X}(\hat{\beta}-\beta^0) + \frac{1}{n} \|\tilde{X}b\|_2^2 + \lambda\|\beta^0\|_1
\end{align*}

Let us work on the event $\{\|\frac{2}{n}\tilde{X}^T\tilde{\epsilon}\|_\infty \leq \tau\}$, which has probability at least $1 - 2p^{1-A^{2}/\left(32\max_{i} \Sigma_{ii}\right)}$$- p e^{-n/136}$ for $\tau = \lambda/2 = \frac{1}{2}A\sigma\sqrt{\frac{\log(p)}{n}}\lambda_{\max}(F)^2$, as it is shown in Lemma $\ref{proof:tau}$. On this event we have
$$\frac{2}{n}\tilde{\epsilon}^T\tilde{X}(\hat{\beta}-\beta^0) \leq \frac{2}{n}\|\tilde{X}^T\tilde{\epsilon}\|_\infty \|\hat{\beta}-\beta^0\|_1 \leq \tau \|\hat{\beta}-\beta^0\|_1$$
from H\"{o}lder's inequality. We now have:
$$\frac{1}{n} \|\tilde{X}(\hat{\beta} - \beta^0 - b)\|_2^2 + \lambda\|\hat{\beta}\|_1  \leq \tau\|\hat{\beta}-\beta^0\|_1 + \frac{1}{n}\|\tilde{X}b\|_2^2 + \lambda\|\beta^0\|_1$$
By using that $\beta^0_{S^c} = 0$, we get that
\begin{align*}
\frac{1}{n} \|\tilde{X}(\hat{\beta} - \beta^0 - b)\|_2^2& + (\lambda-\tau)\|\hat{\beta}_{S^c} - \beta^0_{S^c}\|_1 \notag \\
&\leq \tau\|\hat{\beta}_S-\beta^0_S\|_1 + \lambda\|\beta^0_S\|_1 - \lambda\|\hat{\beta}_S\|_1 + \frac{1}{n}\|\tilde{X}b\|_2^2\notag \\
&\leq (\lambda + \tau) \|\hat{\beta}_S-\beta^0_S\|_1 + \frac{1}{n}\|\tilde{X}b\|_2^2
\end{align*}

Let us now write
$$\phi_{\tilde{\Sigma}}(L,S) = \min_{\beta \in R(L,S)} \frac{\sqrt{\beta^T\tilde{\Sigma}\beta}}{\frac{1}{\sqrt{s}}\|\beta_S\|_1} > 0$$
where $R(L, S) = \{x: \|x_{S^c}\|_1 \leq L\|x_S\|_1\}$

We consider two cases:
\begin{itemize}
\item Case $1$: $\frac{1}{n}\|\tilde{X}b\|_2^2 \leq \lambda\|\hat{\beta}_S-\beta^0_S\|_1$
\item Case $2$: $\frac{1}{n}\|\tilde{X}b\|_2^2 \geq \lambda\|\hat{\beta}_S-\beta^0_S\|_1$
\end{itemize}

In the first case we have
$$\frac{1}{n} \|\tilde{X}(\hat{\beta} - \beta^0 - b)\|_2^2 + (\lambda-\tau)\|\hat{\beta}_{S^c} - \beta_{S^c}^0\|_1 \leq (2\lambda + \tau)\|\hat{\beta}_S-\beta^0_S\|_1$$
From this we see that the error $\hat{\beta} - \beta \in R(L, S) = \{x: \|x_{S^c}\|_1 \leq L\|x_S\|_1\}$ for $L = \frac{2\lambda + \tau}{\lambda - \tau}$, so we have:
\begin{align*}
\frac{1}{n} \|\tilde{X}(\hat{\beta} - \beta^0 - b)\|_2^2& + (\lambda - \tau)\|\hat{\beta} -\beta^0\|_1 \leq 3\lambda\|\hat{\beta}_S - \beta^0_S\|_1 \\
&\leq \frac{3\lambda\sqrt{s}\|\tilde{X}(\hat{\beta} - \beta^0)\|_2}{\sqrt{n}\phi_{\tilde{\Sigma}}(L, S)}\\
&\leq \frac{3\lambda\sqrt{s}\|\tilde{X}(\hat{\beta} - \beta^0 - b)\|_2}{\sqrt{n}\phi_{\tilde{\Sigma}}(L, S)} + \frac{3\lambda\sqrt{s}\|\tilde{X}b\|_2}{\sqrt{n}\phi_{\tilde{\Sigma}}(L, S)}\\
&\leq \frac{9\lambda^2s}{2\phi_{\tilde{\Sigma}}(L, S)^2} + \frac{1}{n} \|\tilde{X}(\hat{\beta} - \beta^0 - b)\|_2^2 + \frac{1}{n}\|\tilde{X}b\|_2^2
\end{align*}
by using the inequality $xy \leq \frac{x^2}{4} + y^2$ twice, which finally gives us
$$(\lambda - \tau)\|\hat{\beta} -\beta^0\|_1 \leq \frac{9\lambda^2s}{2\phi_{\tilde{\Sigma}}(L, S)^2} + \frac{1}{n}\|\tilde{X}b\|_2^2$$

In the second case we have
$$\frac{1}{n} \|\tilde{X}(\hat{\beta} - \beta^0 - b)\|_2^2 + (\lambda-\tau)\|\hat{\beta} -\beta^0\|_1 \leq \frac{3}{n}\|\tilde{X}b\|_2^2$$
So, regardless whether we are in the Case 1 or the Case 2, we get that
\begin{equation*}
(\lambda - \tau)\|\hat{\beta} -\beta^0\|_1 \leq \frac{9\lambda^2s}{2\phi_{\tilde{\Sigma}}(L, S)^2} + \frac{3}{n}\|\tilde{X}b\|_2^2
\end{equation*}

By dividing by $(\lambda - \tau) = \lambda/2$ we get the required inequality which is what we wanted.

It is interesting to note that we might get better rate of the second term in the case when $\norm{\tilde{X}b}_2$ has larger rate than it will be case in this paper, by taking $\lambda$ to be larger than $2\tau$. Since in this case, as we will now see, the penalty level depends on the unknown $b$, we decided to use the "standard" rate of $\lambda$.

By dividing by $(\lambda - \tau)$ and minimizing over $\lambda > \tau$, we get that the minimum value of the RHS of the bound is:
$$\frac{9s\tau}{\phi_{\tilde{\Sigma}}(L, S)^2} + \sqrt{\left(\frac{9s\tau}{\phi_{\tilde{\Sigma}}(L, S)^2}\right)^2 + \frac{54s\|\tilde{X}b\|_2^2}{\phi_{\tilde{\Sigma}}(L, S)^2n}}$$
which is achieved for 
$$\lambda = \tau + \sqrt{\tau^2 + \frac{2\phi_{\tilde{\Sigma}}(L, S)^2\|\tilde{X}b\|_2^2}{3sn}}$$

In the case when $b=0$ and $F=I_n$ (the usual Lasso regression), we indeed take $\lambda = 2\tau$. We can see that, when the coefficient perturbation is present, it is better to penalize more as this will remove the effect of the perturbation to some extent.

Since $L = \frac{2\lambda + \tau}{\lambda - \tau}$ and $\lambda \geq \lambda_{\min} \geq 2\tau$, we have $L \leq 5$ and then $$\phi_{\tilde{\Sigma}}(L, S) \geq \phi_{\tilde{\Sigma}}(5, S) = \phi_{\tilde{\Sigma}}$$

Finally, by using this and the inequality $\sqrt{x^2+y^2} \leq x + y$ where $x,y>0$, we get
$$\|\hat{\beta} - \beta^0\|_1 \leq \frac{18s\tau}{\phi_{\tilde{\Sigma}}^2} + \sqrt{\frac{54s\|\tilde{X}b\|_2^2}{\phi_{\tilde{\Sigma}}^2n}}.$$
\end{proof}

\begin{lemma} \label{proof:tau}
Let $A > 0$ be arbitrary constant. Let us define $$\tau = \frac{1}{2}A\sigma\sqrt{\frac{\log(p)}{n}}\lambda_{\max}(F)^2$$
Let $\epsilon \in \R^n$ be a vector consisting of i.i.d. sub-Gaussian random with mean zero and variance $\sigma^2$ independent of $X$. We have $$\P\left(\frac{2}{n} \|\tilde{X}^T\tilde{\epsilon}\|_\infty \leq \tau\right) \geq 1 - 2p^{1-A^2/(32\max_i \Sigma_{ii})} - pe^{-n/136}$$
\end{lemma}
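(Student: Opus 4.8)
The plan is to condition on the design $X$ and treat $\frac{2}{n}\tilde X^T\tilde\epsilon$ as a linear functional of the independent errors. Writing $\tilde X = FX$ and $\tilde\epsilon = F\epsilon$, we have $\tilde X^T\tilde\epsilon = X^TF^TF\epsilon$, whose $j$-th coordinate equals $(F^TFX_j)^T\epsilon$, where $X_j$ is the $j$-th column of $X$. Since $\epsilon$ is independent of $X$ and $F$ is a deterministic function of $X$ (through its SVD), conditionally on $X$ this coordinate is a fixed linear combination $\sum_k w_k\epsilon_k$ of the i.i.d.\ mean-zero sub-Gaussian errors, hence itself sub-Gaussian with variance proxy of order $\sigma^2\|w\|_2^2 = \sigma^2\|F^TFX_j\|_2^2$. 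A Hoeffding-type tail bound then gives, conditionally on $X$,
\[
\P\!\left(\tfrac{2}{n}\big|(\tilde X^T\tilde\epsilon)_j\big| > \tau \,\Big|\, X\right) \le 2\exp\!\left(-\frac{n^2\tau^2}{8\,\sigma^2\|F^TFX_j\|_2^2}\right).
\]

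First I would bound the weight. Because $F = U\tilde D D^{-1}U^T$ is symmetric with largest singular value $\lambda_{\max}(F)$, we have $\|F^TF\|_{\mathrm{op}} = \lambda_{\max}(F)^2$ and therefore $\|F^TFX_j\|_2 \le \lambda_{\max}(F)^2\|X_j\|_2$. Next I would control the column norms: $\|X_j\|_2^2 = \sum_{k=1}^n X_{kj}^2$ is a sum of $n$ i.i.d.\ sub-exponential variables with mean $\Sigma_{jj} \le \max_i\Sigma_{ii}$, so a Bernstein/sub-exponential concentration inequality bounds $\frac1n\|X_j\|_2^2$ by (a constant multiple of) $\max_i\Sigma_{ii}$ outside an event of probability $e^{-n/136}$; a union bound over the $p$ columns produces a \textbf{good design} event $G$ of probability at least $1 - pe^{-n/136}$ on which every $\|X_j\|_2^2$ is suitably controlled. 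This design step is what contributes the additive $pe^{-n/136}$ term.

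Finally I would substitute $\tau = \tfrac12 A\sigma\sqrt{\log p/n}\,\lambda_{\max}(F)^2$ into the conditional tail bound: on $G$ the exponent simplifies to $\tfrac{A^2\log p}{32\max_i\Sigma_{ii}}$, so each coordinate obeys $\P(\,\cdot\, > \tau \mid X) \le 2p^{-A^2/(32\max_i\Sigma_{ii})}$, and a union bound over the $p$ coordinates yields $2p^{1 - A^2/(32\max_i\Sigma_{ii})}$. Combining this conditional bound with $\P(G^c)\le pe^{-n/136}$ by the law of total probability gives the stated guarantee. The main obstacle I anticipate is the careful bookkeeping that couples the \emph{conditional} sub-Gaussian tail, whose variance proxy depends on the realized $\|X_j\|_2$, with the design-concentration event, and in particular pinning down the numerical constants ($32$ and $136$) so that the two high-probability events combine exactly as stated; the rest is a routine application of Hoeffding's and Bernstein's inequalities.
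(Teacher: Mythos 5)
Your proposal is correct and follows essentially the same route as the paper's proof: conditioning on $X$, exploiting the conditional sub-Gaussianity of each coordinate of $\tfrac{2}{n}X^TF^2\epsilon$ with variance proxy controlled by $\lambda_{\max}(F)^2\|X_j\|_2$, a Bernstein bound plus union bound over columns to get the design event contributing $pe^{-n/136}$, and a union bound over the $p$ coordinates for the main term. The only caveat, shared with the paper itself, is that the exact numerical constants ($32$ and $136$) require the specific Bernstein parameters and the factor-of-two threshold on $\tfrac{1}{n}\|X_j\|_2^2$ used there.
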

\begin{proof}
Let us work on the event $\Omega_2 = \{\max_i \frac{\|X_i\|_2}{\sqrt{n}} \leq 2\max_i \Sigma_{ii}\}$, where $X_i$ is the i-th column of $X$. Since $X_{ji}$ is a mean zero sub-Gaussian random variable with variance $\Sigma_{ii}$, $X_{ji}^2$ satisfies Bernstein's condition with parameter $(8\Sigma_{ii}, 4\Sigma_{ii})$. Bernstein's inequality gives us that 
$$\P(\tfrac{1}{n}\|X_i\|^2 - \Sigma_{ii} > \Sigma_{ii}) \leq \exp(-\frac{n \Sigma_{ii}^2}{2(64\Sigma_{ii}^2 + 4\Sigma_{ii}^2)}) = e^{-\frac{n}{136}}.$$
From here it is easy to see that
$$\P(\Omega_2) \geq 1 - \sum_i \P(\tfrac{1}{n}\|X_i\|^2 > 2\Sigma_{ii}) = 1 - pe^{-\frac{n}{136}}$$

Conditionally on $X$, the components of $\zeta = \frac{2}{n}\tilde{X}^T\tilde{\epsilon} = \frac{2}{n}X^TF^2\epsilon$ are sub-Gaussian random variables since they are linear combinations of independent sub-Gaussian random variables; $\zeta_i$ is sub-Gaussian with mean zero and parameter $\sigma_i = \sigma\|\frac{2}{n}(X_i)^TF^2\|_2$, where $X_i$ is the i-th column of $X$.

From the tail bound for sub-Gaussian random variables, we now get: 
\begin{equation*}
\P(\|\zeta\|_\infty \leq \tau) \geq 1 - \sum_i \P(|\zeta_i| > \tau) \geq 1 - p\max_i 2\exp\left(-\frac{\tau^2}{2\sigma_i^2}\right) = 1 - 2\exp\left(-\frac{\tau^2}{2\max_i \sigma_i^2} + \log p\right)
\end{equation*}
We also have on event $\Omega_2$:
$$\max_i \sigma_i = \max_i \sigma\|\frac{2}{n}(X_i)TF^2\|_2 \leq \frac{2\sigma}{n}\lambda_{\max}(F)^2 \max_i\|X_i\|_2 \leq \frac{4\sigma}{\sqrt{n}}\lambda_{\max}(F)^2 \max_i\Sigma_{ii}$$
and plugging this and the expression for $\tau$ in the expression above gives us
$$\P(\|\zeta\|_\infty \leq \tau) \geq 1 - 2p^{1-\frac{A^2}{32\max_i \Sigma_{ii}}}.$$
Which gives us in general that $\P(\|\zeta\|_\infty \leq \tau) \geq 1 - 2p^{1-\frac{A^2}{32\max_i \Sigma_{ii}}} - pe^{-\frac{n}{136}}$, as required.
\end{proof}

\optimalrateperturbed*
\begin{proof}
From Theorem \ref{general_bound} we have 
$$\|\hat{\beta} - \beta\|_1 \leq C_1\frac{s\lambda}{\phi_{\tilde{\Sigma}}^2}  + C_2\frac{\|\tilde{X}b\|_2^2}{n\lambda}$$
From the assumptions \textbf{(A1)} and \textbf{(A2)} we get $$\norm{\tilde{X}b}_2^2 \leq \lambda_{\max}(\tilde{X})^2\norm{b}_2^2 \O(s\sigma^2\log p)$$
which gives the rate of the second term
$$\frac{\|\tilde{X}b\|_2^2}{n\lambda} = \O_p\left(s\sigma\sqrt{\frac{\log p}{n}}\right)$$
The assumption \textbf{(A3)} gives us that the rate of the first term is 
$$\frac{s\lambda}{\phi_{\tilde{\Sigma}}^2} = \O_p\left(\frac{s\sigma}{\lambda_{\min}(\Sigma)}\sqrt{\frac{\log p}{n}}\right)$$
which is what we wanted to show.
\end{proof}

\perturbationsize*
\begin{proof}
Let us write $C = \Gamma \Sigma_E^{-1/2}$ and let $C = U_CD_CV_C^T$ be its SVD. Then we can write:
\begin{align*}
\norm{b}_2^2 &= \norm{(\Gamma^T\Gamma + \Sigma_E)^{-1}\Gamma^T\delta}_2^2 = \norm{\Sigma_E^{-1/2}(C^TC+I_p)^{-1}C^T\delta}_2^2 \\
&= \norm{\Sigma_E^{-1/2}V_C(D_C^TD_C+I_p)^{-1}D_CV_C^T\delta}_2^2 \\
&\leq \lambda_{\max}(\Sigma_E^{-1}) \lambda_{\max}\left(V_C(D_C^TD_C+I_p)^{-1}D_CV_C^T\right)^2\norm{\delta}_2^2 \\
&\leq \lambda_{\min}(\Sigma_E)^{-1} \max_{i} \left( \frac{(D_C)_{ii}}{(D_C)_{ii}^2 + 1} \right)^2\norm{\delta}_2^2 \\
&\leq \lambda_{\min}(\Sigma_E)^{-1} \frac{1}{\lambda_{\min}(C)^2}\norm{\delta}_2^2 \\
&\leq \lambda_{\min}(\Sigma_E)^{-1} \frac{1}{\lambda_{\min}(\Sigma_E^{-1/2})^2\lambda_{\min}(\Gamma)^2}\norm{\delta}_2^2 \\
&\leq \frac{\lambda_{\max}(\Sigma_E)}{\lambda_{\min}(\Sigma_E)} \cdot \frac{\norm{\delta}_2^2}{\lambda_{\min}(\Gamma)^2} = \O\left(\frac{\norm{\delta}_2^2}{p}\right) = \O\left(\frac{\sigma^2}{p}\right).
\end{align*}

Now, if the rows or columns of $\Gamma$ are sub-Gaussian random variables with covariance matrix $\Omega$, we can write $\Gamma$ as $Z\Omega^{1/2}$ or $\Omega^{1/2}Z$ respectively, where the rows or columns of $Z$ are sub-Gaussian random variables with covariance $I$. In both cases we have from Theorems 5.39 and 5.58 from \citet{vershynin2016high} that there are constants $c$ and $C$ such that with probability $1 - 2\exp(-ct^2)$ we have
$$\sqrt{p} - C\sqrt{q} - t \leq \lambda_{\min}(Z) \leq \lambda_{\max}(Z) \leq \sqrt{p} + C\sqrt{q} + t$$
and thus since $\tfrac{p}{q} \to \infty$ we have
$$\lambda_{\min}(Z) = \Omega_p(\sqrt{p}),$$
which implies
$$\lambda_{\min}(\Gamma) = \Omega_p(\sqrt{p})$$
if $\lambda_{\min}(\Omega)$ is bounded from below, as we wanted.
\end{proof}

\averagesingularvalue*
\begin{proof}
We have 
$$\frac{1}{n}\sum_{i=1}^r d_i^2 = \Tr(\tfrac{1}{n}X^TX) = \Tr(\hat{\Sigma})$$
Since the rows of $X$ are sub-Gaussian random vectors, we get that $\norm{\tfrac{1}{n}X^TX - \Sigma}_\infty = \O_p\left(\sqrt{\frac{\log p}{n}}\right)$, as in Lemma \ref{proof:tau}. This gives us that 
$$|\Tr(\tfrac{1}{n}X^TX) - \Tr(\Sigma)| = \O_p\left(p\sqrt{\frac{\log p}{n}}\right).$$
Therefore, we have
$$\left|\frac{1}{n}\sum_{i=1}^r d_i^2 - \Tr(\Sigma)\right| = \O_p\left(p\sqrt{\frac{\log p}{n}}\right)$$ and the result follows since we have assumed that $\Tr(\Sigma) = \Omega(p)$.
\end{proof}

\begin{lemma} \label{proof:singularvalues}
Let $B \in \R^{p \times p}$ be a symmetric positive definite matrix and let $A \in \R^{n \times p}$ be arbitrary matrix, $n < p$. Let $\lambda_i(A)$ and $\lambda_i(AB)$ be the i-th largest singular values of $A$ and $AB$ respectively. Assume that the smallest singular value of $B$ is at least $1$. For $i \leq n$, we have $$\lambda_i(A) \leq \lambda_i(AB).$$
On the other hand, if we assume that the largest singular value of $B$ is at most $1$, we have for $i \leq n$ that $$\lambda_i(A) \geq \lambda_i(AB).$$
\end{lemma}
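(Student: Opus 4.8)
The plan is to translate both inequalities into eigenvalue comparisons for two $n \times n$ positive semidefinite matrices ordered in the Loewner sense, and then invoke Weyl's monotonicity theorem. First I would record the standard identity relating singular values to Gram matrices: the squared singular values of $A$ are exactly the eigenvalues of the $n \times n$ matrix $AA^T$, while, crucially using that $B$ is symmetric so that $BB^T = B^2$, the squared singular values of $AB$ are the eigenvalues of $(AB)(AB)^T = AB^2A^T$. Since $n < p$, both $AA^T$ and $AB^2A^T$ are $n \times n$ and their full spectra account for all $n$ singular values $\lambda_1 \ge \cdots \ge \lambda_n$. Hence it suffices to compare the eigenvalues of $AA^T$ and $AB^2A^T$.

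Next I would exploit the hypothesis on the singular values of $B$. Because $B$ is symmetric positive definite, its singular values coincide with its eigenvalues, so the assumption $\lambda_{\min}(B) \ge 1$ is equivalent to $B \succeq I$, and therefore $B^2 \succeq I$; symmetrically, $\lambda_{\max}(B) \le 1$ gives $B^2 \preceq I$. The congruence map $M \mapsto AMA^T$ is monotone for the Loewner order, since $v^T A M A^T v = (A^T v)^T M (A^T v)$; thus from $B^2 - I \succeq 0$ we obtain
$$A(B^2 - I)A^T = AB^2A^T - AA^T \succeq 0,$$
i.e. $AB^2A^T \succeq AA^T$, and the reversed inequality $AB^2A^T \preceq AA^T$ in the second case.

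Finally I would apply Weyl's monotonicity theorem, a direct consequence of the Courant--Fischer min-max characterization: for symmetric matrices $M \succeq N$ one has $\lambda_i(M) \ge \lambda_i(N)$ for every $i$. Taking $M = AB^2A^T$ and $N = AA^T$ yields $\lambda_i(AB)^2 \ge \lambda_i(A)^2$ for all $i \le n$ in the first case, and the reverse ordering in the second; taking nonnegative square roots gives precisely the two claimed inequalities.

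I do not anticipate a real obstacle here, as the argument is essentially algebraic. The only two points that warrant a moment's care are the identity $(AB)(AB)^T = AB^2A^T$, which depends on the symmetry of $B$ (for a general $B$ one would only get $ABB^TA^T$, and the hypotheses are phrased in terms of the singular values of $B$, so symmetry is what lets us pass cleanly to $B^2$), and the verification that the congruence $M \mapsto AMA^T$ preserves the Loewner order, which is immediate from the displayed quadratic-form identity.
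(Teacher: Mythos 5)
Your proof is correct. The identity $(AB)(AB)^T = AB^2A^T$ (valid because $B$ is symmetric), the equivalence of $\lambda_{\min}(B)\ge 1$ with $B\succeq I$ and hence $B^2\succeq I$, the monotonicity of the congruence $M\mapsto AMA^T$ in the Loewner order, and Weyl's monotonicity theorem together give exactly the claimed inequalities, and the second case follows symmetrically from $B^2\preceq I$.

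Your route is genuinely different in presentation from the paper's, though both ultimately rest on Courant--Fischer. The paper argues directly at the level of singular values: it takes the left singular vectors $e_1,\dots,e_n$ of $A$ and $f_1,\dots,f_n$ of $AB$, and proves $\lambda_{k+1}(AB)\ge\lambda_{k+1}(A)$ by induction, picking a unit vector in the nontrivial intersection of $\mathrm{span}\{f_1,\dots,f_k\}^{\perp}$ with $\mathrm{span}\{e_1,\dots,e_{k+1}\}$ and using $\|BA^Tv\|_2\ge\|A^Tv\|_2$; the reverse inequality is then obtained by the substitution $B\leftarrow B^{-1}$, $A\leftarrow AB$. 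You instead package the whole comparison as a Loewner ordering of the two Gram matrices $AA^T$ and $AB^2A^T$ and invoke Weyl's monotonicity as a black box. What your approach buys is modularity and symmetry: both directions of the lemma come out of the same two-line computation with no inversion trick, and the only hypotheses actually used ($B$ symmetric, $B^2\succeq I$ or $B^2\preceq I$) are laid bare. What the paper's approach buys is self-containedness -- it effectively reproves the needed monotonicity from scratch via the subspace-dimension count, without appealing to a named theorem. Either proof is complete and acceptable.
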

\begin{proof}
Let us first show the first statement. Let $e_1, \ldots, e_n$ i $f_1, \ldots, f_n$ be the left singular vectors of $A$ and $AB$ corresponding to the singular values in a decreasing order.
For $i=1$, since $\lambda_{\min}(B) \geq 1$, we have: $$\lambda_1(AB) \geq \|(AB)^Te_1\|_2 \geq \|BA^Te_1\|_2 \geq \|A^Te_1\|_2 = \lambda_1(A)$$ 

Let us proceed by induction. Since $\dim(U \cap V) \geq \dim(U) + \dim(V) - n$, we conclude that $F_k = \text{span}\left\{f_1,\ldots,f_k\right\}^{\perp}$ and $\text{span}\left\{e_1,\ldots e_{k+1} \right\}$ have a non-trivial intersection, so we can choose a unit vector $v = \sum_{j=1}^{k+1} \alpha_je_j \in F_k$. Since $\lambda_{k+1}(AB) = \max\{ (AB)^Tx : x \in F_k, \|x\|_2 = 1 \}$, we have:
$$\lambda_{k+1}(AB) \geq \|(AB)^Tv\|_2 \geq \|A^Tv\|_2 = \sqrt{\sum_{j=1}^{k+1} \alpha_j^2 \lambda_j(A)^2} \geq \lambda_{k+1}(A)$$ 
The second inequality holds because $\lambda_{\min}(B) \geq 1$ and the last because $\sum \alpha_j^2=1$ and $\lambda_i(A)$ are decreasing.

We can derive the second statement from the first one by considering $B \leftarrow B^{-1}, A \leftarrow AB$, since if $B$ has all singular values below $1$, $B^{-1}$ has all singular values at least equal to $1$.
\end{proof}

\PCAsingularvalue*
\begin{proof}
We can write $X = Z\Sigma^{1/2}$, where the rows of $Z$ are i.i.d. sub-Gaussian random variables and $\Cov(Z) = I_p$. From Theorem 5.39 from \citet{vershynin2016high} we now have that there exist constants $c$ and $C$ such that 
$$\sqrt{p} + C\sqrt{n} + t \geq \lambda_{\min}(Z)$$
with probability at least $1 - 2\exp(-ct^2)$. Therefore $\lambda_{\max}(Z) = \O_p(\sqrt{p})$.

From this, the assumption $\lambda_{\max}(\Sigma_E) = \O(1)$ and the second part of Lemma \ref{singularvalues}, we have that: 
$$\lambda_{q+1}(X) = \lambda_{q+1}(X) \leq \lambda_{q+1}(\Sigma^{1/2}) \lambda_{\max}(Z) \leq \lambda_{\max}(\Sigma_E)^{1/2} \lambda_{\max}(Z) = \O_p(\sqrt{p}),$$
as required.
\end{proof}

\singularvalues*
\begin{proof}
Let us write $X = Z \Sigma^{1/2}$ where $Z \in \R^{n \times p}$ is a matrix with i.i.d. sub-Gaussian rows with covariance matrix $I_p$. Let $\zeta_1 \leq \ldots \leq \zeta_n$ be the singular values of $Z$.

Since we can write $X = Z \Sigma^{1/2}$, by Lemma \ref{proof:singularvalues}, we have $d_k \geq \lambda_{\min}(\Sigma)^{1/2}\zeta_k$, so it suffices to show that $\zeta_k = \Omega_p(p^{1/2})$.

If we are under the first set of assumptions, we have from Theorem 5.39 from \citet{vershynin2010introduction} that there exist constants $c$ and $C$ such that 
$$\sqrt{p} - C\sqrt{n} - t \leq \zeta_n$$
with probability at least $1 - 2\exp(-ct^2)$. Therefore, since $\tfrac{p}{n} \to \infty$, we have $\zeta_k \geq \zeta_n = \Omega_p(p^{1/2})$, as required.

If we are under the second set of assumptions, we have that the entries of $Z$ are i.i.d. $N(0,1)$ variables. From Corollary 5.35 from \citet{vershynin2010introduction} that there exists constant $c$ such that 
$$\sqrt{p} - \sqrt{n} - t \leq \zeta_n$$
with probability at least $1 - 2\exp(-ct^2)$. Therefore, since $\liminf \tfrac{p}{n} > 1$, we have $\zeta_k \geq \zeta_n = \Omega_p(p^{1/2})$, as required.

If we are under the third set of assumptions, we can assume $\frac{p}{n} \to 1$, because otherwise we would be under the second set of assumptions. The empirical distribution of the nonzero singular values of $\frac{1}{n}Z^TZ$ converges to the Marchenko-Pastur density supported on $[0, 4]$ \citep{marchenko1967distribution}, which is given by
$$\frac{1}{2\pi} \sqrt{\frac{4 - x}{x}}\1\{x \in [0, 4]\}$$

Let $t = \limsup \frac{k}{n} < 1$. Then we can choose $0 < \delta < 1 -t$ and $z > 0$ such that $\P(\zeta > z) = t + \delta$, where $\zeta$ is drawn from the Marchenko-Pastur density given above. 

We have 
$$\frac{\text{\# singular values of }\frac{Z^TZ}{n} \text{ larger than } z}{n} \to t + \delta$$
so the number of singular values of $\frac{Z^TZ}{n}$ which are larger than $z$ will eventually be larger than $nt > k$, therefore $\frac{d_k^2}{n} > z > 0$ eventually, so $\zeta_k = \Omega_p(n^{1/2}) = \Omega_p(p^{1/2})$, as required. 
\end{proof}

\PCAcompatibility*
\begin{proof}
Let us denote $C=H \Gamma$ and let us have $\tilde{C} = X - \tilde{X} = U(D-\tilde{D})V^T$, where $(D - \tilde{D})$ is a $(r \times r)$-dimensional diagonal matrix (recall $r=\min(n,p)$) whose first $q$ diagonal elements correspond to the first $q$ diagonal elements of $X$ and the remaining $r-q$ elements are zero. $\tilde{C}$ corresponds to the first $q$ principal components and is a commonly used estimator of $C$ in the factor analysis literature.

From Theorem 3 in \citet{bai2003inferential} we have for $q$ fixed that
$$\min(\sqrt{n}, \sqrt{p})(\tilde{C}_{ij} - C_{ij}) = N(0, \sigma^2_{ij}),$$
where $\sigma_{ij}^2 < B$ is bounded from above. By using the standard union bound argument as in Lemma \ref{proof:tau}, we have that 
$$\norm{\tilde{C} - C}_\infty = \O_p\left(\sqrt{\frac{\log p}{\min(n, p)}}\right)$$

Therefore, since we have $X = C - \tilde{C} + E$, we obtain by triangle inequality, the fact that $\norm{x}_2 \leq \sqrt{p}\norm{x}_1$ for $x \in \R^p$ and the Hölder's inequality that
\begin{align*}
\frac{1}{\sqrt{n}}\norm{X\alpha}_2 
&\geq \frac{1}{\sqrt{n}}\norm{E\alpha}_2 - \frac{1}{\sqrt{n}}\norm{(\tilde{C} - C)\alpha}_2 \\
&\geq \frac{1}{\sqrt{n}}\norm{E\alpha}_2 - \frac{1}{\sqrt{n}}\sqrt{p}\norm{\tilde{C} - C}_\infty\norm{\alpha}_1 \\
&\geq \frac{1}{\sqrt{n}}\norm{E\alpha}_2 - \frac{1}{\sqrt{n}}\sqrt{p}\norm{\tilde{C} - C}_\infty\left(6\norm{\alpha_S}_1\right).
\end{align*}

Therefore, from the definition of the compatibility constant, we now have
$$\phi_{\tilde{\Sigma}} \geq \phi_{\tfrac{1}{n}E^TE} - \frac{6\sqrt{ps}}{\sqrt{n}}\norm{\tilde{C} - C}_\infty \to \phi_{\tfrac{1}{n}E^TE},$$
in probability, since
$$\sqrt{ps}\norm{\tilde{C} - C}_\infty = \O_p\left(\sqrt{\frac{ps}{n}}\sqrt{\frac{\log p}{\min(n, p)}}\right)$$ 
converges to $0$ in probability.

Finally, one can obtain that $\phi^2_{\tfrac{1}{n}E^TE} = \Omega_p(\lambda_{\min}(\Sigma_E))$ from the standard argument as in \cite{buhlmann2011statistics}, i.e. by using that $\tfrac{1}{n}E^TE$ concentrates around its expectation $\Sigma_E$.
\end{proof}

%%%%%%%%%%%%%%%%%%%%%%%%%%%%%%%%%

\compatibilityineq*
\begin{proof}
We have 
$$\alpha^T\tilde{\Sigma}\alpha = \sum_{i \leq r} \tilde{d}_i^2(V_i^T\alpha)^2 = \sum_{i \leq r} (\tilde{d}_{(i)}^2 - \tilde{d}_{(i+1)}^2)\sum_{j \leq i}(V_{(j)}^T\alpha)^2$$
where we define $\tilde{d}_{r+1} = 0$ for convenience.
Now using the fact that the infimum of the sum is not smaller than the sum of the infimums, we get
$$\phi_{\tilde{\Sigma}}^2 \geq \sum_{i \leq r} \frac{1}{n}(\tilde{d}_{(i)}^2 - \tilde{d}_{(i+1)}^2) \phi_{M_i}^2 = \sum_{i \leq r} \frac{1}{n}\tilde{d}_{(i)}^2(\phi_{M_i}^2 - \phi_{M_{i-1}}^2)$$
where $M_0$ is defined as the null matrix for convenience.
Let us now fix $k \leq r$. By using that the sequence $\tilde{d}_{(i)}$ is decreasing, we have 
$$\sum_{i \leq r} \frac{1}{n}\tilde{d}_{(i)}^2(\phi_{M_i}^2 - \phi_{M_{i-1}}^2) \geq \sum_{i \leq k} \frac{1}{n}\tilde{d}_{(k)}^2(\phi_{M_i}^2 - \phi_{M_{i-1}}^2) = \frac{1}{n}\tilde{d}_{(k)}^2\phi_{M_k}^2$$
which finishes the proof.
\end{proof}

\uniformV*
\begin{proof}
Let $Z \in \R^{k \times p}$ be a random matrix whose components are $Z_{ij} \iid N(0,1)$. Let $Z = U_Z D_Z V_Z^T$ be its SVD and let $\zeta_1 \geq \ldots \geq \zeta_k$ be its singular values.

Since $V$ is independent of $D$, $[V_{(1)}, \ldots, V_{(k)}]$ is uniform on the Stiefel manifold as well. This matrix has the same distribution as the matrix $V_Z$ and thus $M_k$ has same distribution as $V_ZV_Z^T$

From the Lasso theory (\cite{buhlmann2011statistics}) we know that $\phi^2_{\frac{1}{k}Z^TZ} \geq 1/2$ with high probability. On the other hand we have $\phi^2_{\frac{1}{k}Z^TZ} \leq \frac{1}{k}\zeta_1^2 \phi_{V_ZV_Z^T}^2$.

From Corollary 5.35. of \cite{vershynin2016high} we know 
$$\zeta_1 \leq \sqrt{p} + \sqrt{k} + C\sqrt{\log p}$$
with probability at least $1-2p^{-C^2/2}$, for any $C > 0$. This implies that $\zeta_1 = \O_p(\sqrt{p})$. By combining those results, we have that $\phi_{V_ZV_Z^T}^2 = \Omega_p\left(\frac{k}{p}\right) = \Omega_p\left(\frac{n}{p}\right)$, which finishes the proof.
\end{proof}

\paragraph{Acknowledgments.} The research of D. \'Cevid and P. B\"{u}hlmann was supported by the European Research Council under the Grant Agreement No 786461 (CausalStats - ERC-2017-ADG).

\vskip 0.2in
\bibliography{bibfile}

\end{document}